\theoremstyle{plain}
\newtheorem{theorem}{Theorem}%[section]
\newtheorem*{theorem*}{Theorem}
\newtheorem{proposition}[theorem]{Proposition}
\newtheorem{corollary}[theorem]{Corollary}
\newtheorem*{corollary*}{Corollary}
\newtheorem{lemma}[theorem]{Lemma}
\theoremstyle{remark}
\newtheorem{remark}[theorem]{Remark}
\newtheorem{example}[theorem]{Example}
\theoremstyle{definition}
\newtheorem{definition}[theorem]{Definition}
\newtheorem{algorithm}[]{Algorithm}
\newcommand{\lend}[2]{\mathrm{End}(_{#1}#2)}
\newcommand{\rend}[2]{\mathrm{End}(#2_{#1})}
\newcommand{\raiz}{u}
\begin{document}

\title[Decoding RS Skew-differential Codes]{Decoding Reed-Solomon Skew-Differential Codes}

\author[G\'{o}mez-Torrecillas]{Jos\'{e} G\'{o}mez-Torrecillas}
\address{Department of Algebra, University of Granada}
\curraddr{}
\email{gomezj@ugr.es}

%    author two information
\author[Navarro]{Gabriel Navarro}
\address{CITIC and Department of Computer Science and Artificial Intelligence, University of Granada}
\email{gnavarro@ugr.es}

%    author two information
\author[S\'{a}nchez-Hern\'{a}ndez]{Jos\'e Patricio S\'anchez-Hern\'andez}
\address{Department of Algebra, University of Granada}
\email{jpsanchez@correo.ugr.es}
\thanks{Research supported by grants A-FQM-470-UGR18 from Junta de Andaluc\'{\i}a and FEDER and PID2019-110525GB-100 from AEI and FEDER}
\thanks{The third author was supported by The National Council of Science and Technology (CONACYT) with a scholarship  for a Postdoctoral Stay in the University of Granada.}

\date{\today}

\maketitle

\begin{abstract}
A large class of MDS linear codes is constructed. These codes are endowed with an efficient decoding algorithm. Both the definition of the codes and the design of their decoding algorithm only require from Linear Algebra methods, making them fully accesible for everyone. Thus, the first part of the paper develops a direct presentation of the codes by means of parity-check matrices, and the decoding algorithm rests upon matrix and linear maps manipulations.  The somewhat more sophisticated mathematical context (non-commutative rings) needed for the proof of the correctness of the decoding algorithm is postponed to the second part. A final section locates the Reed-Solomon skew-differential codes introduced  here within the general context of codes defined by means of skew polynomial rings. 
\end{abstract}

\section*{Introduction}
The treatment of cyclic linear codes as ideals of a quotient of a polynomial ring inspired the extension of cyclic-like conditions to the realm of skew-polynomial (non-commutative) rings both from the perspective of block codes \cite{Boucher/Geiselmann/Ulmer:2007, Boucher/Ulmer:2009, Boucher/Ulmer:2014, Martinez:2018} and convolutional codes \cite{Piret:1976,Roos:1979,Gluesing/Schmale:2004,Lopez/Szabo:2013,Gomez/alt:2016,Gomez/alt:2017a}. One of the nicest features of some (commutative) cyclic codes is the possibility of designing efficient algebraic decoding algorithms taking advantage of their rich algebraic structure \cite{Gorenstein/Zierler:1961,Peterson:1960,Sugiyama:1975}. These classic approaches have been adapted or, in some cases, inspire, decoding procedures for some families of cyclic-like codes based on non-commutative polynomial arithmetics \cite{Boucher/Ulmer:2014,Liu/etal:2015,Gomez/alt:2017b,Gomez/alt:2018a}. Dealing with these codes, presented in the language of left ideals and modules, requires a training in non-commutative rings which could limit their difusion and potential practical use among coding theorists and engineers. 

In this paper we simplify and extend to a considerably broader class of codes the algebraic decoding algorithms designed in \cite{Gomez/alt:2018a} and \cite{Gomez/alt:2019b} for skew RS and convolutional differential RS codes, respectively. These codes were presented as left ideals of certain non-commutative polynomial rings, and their decoding algorithms make use of advanced algebraic tools like evaluation of non-commutative polynomials. In contrast, both the construction of the codes in this note, and the description and implementation of their decoding algorithm only require basic Linear Algebra over a field. The introduction of the (minimal) algebraic machinery needed to prove the correctness of the decoding algorithm is postponed till Section \ref{fundamentos}. In this way, the material of Section \ref{Frontispicio} is ready to use even without knowing what a non-commutative ring is. 

We work over a general field $K$, since our interests include block linear codes ($K = \mathbb{F}_q$, a finite field) and convolutional codes ($K = \mathbb{F}_q(t)$, the rational function field in a variable $t$ over $\mathbb{F}_q$). 

We call our codes  Reed-Solomon (RS)  skew-differential codes because they are defined from a skew derivation of the field $K$ by means of parity check matrices (Definition \ref{Codefined}).  Each code $C_{(\varphi_u,\alpha,d)}$ depends on the three parameters reflected in the notation. Concretely, $\varphi_u$ is a transformation of $K$, defined from an element $u \in K$ and the skew derivation, that becomes linear with respect to a suitable subfield $K^{\varphi_u}$ of $K$, $\alpha$ is a cyclic vector of $\varphi_u$, and $d$ is the designed minimum Hamming distance of the code. After the description of the code $C_{(\varphi_u,\alpha,d)}$, Section \ref{Frontispicio} proceeds to the design of the its algebraic decoding algorithm (see Algorithm \ref{algoritmo}). It runs at follows: from a  word corrupted by up to $\tau = \lfloor \frac{d-1}{2} \rfloor$ errors, a matrix is computed recursively from the syndromes. The  left kernel of this matrix contains a nonzero vector $\rho$ (Proposition \ref{kernelrho}). With this vector at hand, a second matrix $L$ is recursively computed. This matrix gives an easy procedure to get the positions of the errors (Theorem \ref{Main}). Once these positions are known, the values of the errors are the solution of a linear system of equations. 

Section \ref{Ejemplos} deals with  the examples. We analyze when the finiteness condition (see Proposition \ref{phifinite}) that grants the construction of the code $C_{(\varphi_u,\alpha,d)}$ holds (Proposition \ref{genejemplos} and Corollary \ref{corejemplos}). Its turns out that in the cases of interest (block and convolutional skew-differential codes), the code $C_{(\varphi_u,\alpha,d)}$ is always built (Subsections \ref{block} and \ref{convolutional}). Concrete examples of application of Algorithm \ref{algoritmo} are shown in both cases. The computations are done with the help of the SageMath symbolic computation system \cite{sage}. 

Section \ref{fundamentos} is devoted to prove the mathematical results that ground Algorithm \ref{algoritmo}.  Our algebraic setup requires from the ring of additive endomorphisms of $K$ generated by $K$ and the map $\varphi_\raiz$, and the application of Jacobson-Bourbaki's Correspondence to identify which maps $\varphi_u$ are suitable for constructing skew-differential codes (Proposition \ref{JB}). Jacobson-Bourbaki's Correspondence is a more general version of Galois Correspondence that applies to additive maps (like $\varphi_\raiz$) which are not necessarily field automorphisms (see \cite{Sweedler:1975} or \cite{Winter:1974}). We also use the non-commutative Wronskian and Vandermonde matrices investigated in \cite{Delenclos/Leroy:2007}. Indeed, it turns out that $\mathcal{R}$ is isomorphic to the factor ring of a skew polynomial ring by the minimal polynomial of $\varphi_u$ (see Proposition \ref{isopoly}). This isomorphism eases the conceptual manipulation of the elements of $\mathcal{R}$ in Proposition \ref{Prop 3} and Theorem \ref{locatordiv}. The section is closed with the proof of the correctness of Algorithm \ref{algoritmo}.

Section \ref{SDmodule} gives a precise description of an RS skew-differential code $C_{(\varphi_\raiz,\alpha,d)}$ as a left ideal of $\mathcal{R}$ (Corollary \ref{ubicacion}). From the practical point of view, this serves to see in detail how the codes investigated in \cite{Gomez/alt:2018a} and \cite{Gomez/alt:2019b} are obtained as particular cases of $C_{(\varphi_\raiz, \alpha, d)}$ (see Examples \ref{RSauto} and \ref{RSder}). Besides, a generator  of $C_{(\varphi_\raiz,\alpha, d)}$ is explicitly given. On the theoretical side, Corollary \ref{ubicacion} identifies RS skew-differential codes as members of the very general family of module $(\sigma,\delta)$--codes defined in \cite{Boucher/Ulmer:2014}. In fact, we precisely characterize the module $(\sigma,\delta)$--codes with ``word ambient'' ring $\mathcal{R}$ (Proposition \ref{SDcodesversusmodule} and Proposition \ref{DCCdim}), and describe which of these codes are RS skew-differential codes (Corollary \ref{ubicacion}), thus enjoying an efficient algebraic decoding algorithm.

\section{Definition of the codes and specification of their algebraic decoding algorithm}\label{Frontispicio}
Let $K$ be a field. For any additive map\footnote{That is, it satisfies that $\phi(a+b) =\phi(a)+\phi(b)$ for all $a, b \in K$.} $\phi : K \to K$, set
\[
K^{\phi} = \{ b \in K : \phi(ab) = \phi(a)b \text{ for all } a \in K \}.
\]
A straightforward argument shows that $K^\phi$ is a subfield of $K$ and, obviously, $\phi$ becomes a $K^\phi$--linear map. A tempting idea is to use good enough field extensions $K/K^\phi$ to design $K$--linear error corrector codes with efficient algebraic decoding algorithms. In this note,  we consider additive maps on $K$ stemming from skew derivations. 

A \emph{skew derivation} on $K$ is a pair $(\sigma, \delta)$, where $\sigma$ is a field automorphism of $K$, and $\delta : K \to K$ is an additive map  subject to the condition
\begin{equation}\label{torcida}
\delta(ab) = \sigma(a)\delta(b) + \delta(a)b,
\end{equation}
for all $a, b \in K$. 

Given $\raiz \in K$, let $\varphi_u : K \to K$ be defined by
\begin{equation}
\varphi_u(a) = \sigma(a)\raiz + \delta(a),
\end{equation}
for all $a \in K$.

\begin{proposition}\label{phifinite}
Assume that the dimension of $K$ as a $K^{\varphi_u}$--vector space  is $m < \infty$. The minimal polynomial of the $K^{\varphi_u}$--linear map $\varphi_u$ has degree $m$ and, henceforth, it has at least a cyclic vector. Moreover $\alpha \in K$ is such a cyclic vector if and only if the matrix
\[
A = \left( \begin{array}{ccccc}
\alpha & \varphi_u(\alpha)& & \cdots &\varphi_u^{m-1}(\alpha)  \\
\varphi_u(\alpha)& \varphi_u^{2}(\alpha)&& \cdots & \varphi_u^{m}(\alpha)  \\

\vdots &\vdots && \ddots& \vdots\\
\varphi_u^{m-1}(\alpha)& \varphi_u^{m}(\alpha) && \cdots & \varphi_u^{2m-2}(\alpha)
\end{array} \right)
\]
is invertible. 
\end{proposition}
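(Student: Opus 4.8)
The plan is to set $F = K^{\varphi_u}$ and regard $\varphi_u$ as an $F$-linear endomorphism of the $m$-dimensional $F$-space $K$. I note first that the very definition of $K^{\varphi_u}$ makes $\varphi_u$ genuinely $F$-linear: taking $b = c \in F$ in $\varphi_u(ac) = \varphi_u(a)c$ and using commutativity of $K$ gives $\varphi_u(cy) = c\varphi_u(y)$. Since the minimal polynomial of a linear operator on an $m$-dimensional space has degree at most $m$, with equality exactly when the operator is non-derogatory, i.e.\ possesses a cyclic vector, the first two assertions coincide; so it suffices to prove that the minimal polynomial has degree $m$. I would obtain this by showing that the powers $1, \varphi_u, \dots, \varphi_u^{m-1}$ are linearly independent over $K$ inside the ring $\mathrm{End}(K,+)$ of additive endomorphisms, where $c \in K$ acts by the left multiplication $L_c\colon x \mapsto cx$; independence over $K$ forces independence over $F \subseteq K$, hence degree exactly $m$.

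The structural input is the operator identity $\varphi_u L_c = L_{\sigma(c)}\varphi_u + L_{\delta(c)}$, an immediate consequence of \eqref{torcida}. It allows every word in the $L_c$ and $\varphi_u$ to be rewritten with all left multiplications on the left, so that the subring $\mathcal{R}$ generated by $K$ and $\varphi_u$ satisfies $\mathcal{R} = \sum_{i\ge 0} K\varphi_u^i$; hence $\dim_K \mathcal{R}$ equals the number of $K$-independent powers of $\varphi_u$. The main obstacle is to prove $\dim_K \mathcal{R} = m$. For this I would appeal to the Jacobson--Bourbaki correspondence for the finite extension $K/F$ (see \cite{Sweedler:1975, Winter:1974}): unwinding the definition of $K^{\varphi_u}$ shows that $L_c$ commutes with $\varphi_u$ precisely when $c \in F$, so $F$ is exactly the field of scalars attached to $\mathcal{R}$, and the correspondence then yields $\mathcal{R} = \mathrm{End}_F(K)$, which is free of rank $m = [K:F]$ as a left $K$-module. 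Consequently $1, \varphi_u, \dots, \varphi_u^{m-1}$ form a left $K$-basis of $\mathcal{R}$; in particular they are $K$-independent and the minimal polynomial has degree $m$, so a cyclic vector exists. This is precisely the ring-theoretic machinery postponed to Section~\ref{fundamentos} (cf.\ Propositions~\ref{JB} and \ref{isopoly}).

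For the characterization of cyclic vectors, write $v_j = \varphi_u^j(\alpha)$, so that $A = \big(\varphi_u^{i+j}(\alpha)\big) = \big(\varphi_u^i(v_j)\big)$ is the (non-commutative) Wronskian matrix of $v_0, \dots, v_{m-1}$ in the sense of \cite{Delenclos/Leroy:2007}, and $\alpha$ is a cyclic vector exactly when $v_0, \dots, v_{m-1}$ are an $F$-basis of $K$. I then argue the two implications by hand. If $\alpha$ is not cyclic, these $m$ vectors are $F$-dependent, say $\sum_j c_j v_j = 0$ with $c_j \in F$ not all zero; applying the $F$-linear maps $\varphi_u^i$ gives $\sum_j c_j \varphi_u^{i+j}(\alpha) = \varphi_u^i\big(\sum_j c_j v_j\big) = 0$ for every $i$, a nontrivial $K$-linear relation among the columns of $A$, so $A$ is singular. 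Conversely, assume $\alpha$ cyclic and suppose the rows of $A$ satisfy $\sum_i \mu_i r_i = 0$ with $\mu_i \in K$, where $r_i$ is the $i$-th row; this says $P(v_j) = 0$ for all $j$, where $P = \sum_i \mu_i \varphi_u^i \in \mathcal{R}$. Since $\{v_j\}$ is an $F$-basis and $P$ is $F$-linear, $P = 0$ in $\mathrm{End}_F(K)$, and the $K$-independence of the powers established above forces every $\mu_i = 0$. Hence the rows of $A$ are $K$-independent, $A$ is invertible, and the two implications together give the stated equivalence.
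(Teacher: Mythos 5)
Your proof is correct, and it splits into two halves that compare differently with the paper. The first half (degree of the minimal polynomial and existence of a cyclic vector) is essentially the paper's own argument: the paper's Proposition \ref{JB} also works inside the ring $\mathcal{R}$ generated by $K$ and $\varphi_u$ in the additive endomorphisms of $K$, uses the commutation rule $\varphi_u a = \sigma(a)\varphi_u + \delta(a)$ to write $\mathcal{R}=\sum_i K\varphi_u^i$, and invokes the Jacobson--Bourbaki correspondence of \cite[Theorem 4.1]{Sweedler:1975} to get $\dim_K\mathcal{R}=m$, whence the minimal polynomial has degree exactly $m$. (One step you should make explicit: applying the correspondence to $\mathcal{R}$ requires knowing $\dim_K\mathcal{R}<\infty$ beforehand; in your setup this is immediate from the inclusion $\mathcal{R}\subseteq \mathrm{End}_F(K)$, where $F=K^{\varphi_u}$, whose left $K$-dimension is $m$, while the paper gets it from the a priori bound $n\le m$ on the degree of the minimal polynomial.) The second half is where you genuinely depart from the paper. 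The paper's Lemma \ref{Wronskian} deduces the equivalence ``$\alpha$ cyclic $\Leftrightarrow A$ invertible'' by identifying $K^{\varphi_u}$ with a $(\sigma,\delta)$-centralizer and $A$ with a Wronskian matrix, and then citing \cite[Theorem 5.3]{Delenclos/Leroy:2007}; you instead prove both implications by elementary linear algebra: an $F$-dependence among the $\varphi_u^j(\alpha)$ propagates, via $F$-linearity of the $\varphi_u^i$, to a column dependence of $A$, while a row dependence of $A$ produces an operator $P=\sum_i\mu_i\varphi_u^i$ vanishing on an $F$-basis, hence $P=0$, hence all $\mu_i=0$ by the $K$-independence of $1,\varphi_u,\dots,\varphi_u^{m-1}$ already established. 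Your route is self-contained and avoids the skew-polynomial machinery entirely, which fits the spirit of Section \ref{Frontispicio}. What the paper's citation buys in exchange is a stronger conclusion: \cite[Theorem 5.3]{Delenclos/Leroy:2007} yields invertibility of \emph{every} square Wronskian submatrix $W(\varphi_u^{t_1}(\alpha),\dots,\varphi_u^{t_n}(\alpha))$ with $\{t_1,\dots,t_n\}\subseteq\{0,\dots,m-1\}$, not just of the full matrix $A$; that stronger statement is exactly what the paper reuses to prove the MDS property (Theorem \ref{MDS}) and the nonsingularity claims underlying the decoder (Proposition \ref{errorsol}, Lemma \ref{lematheta}). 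So your argument fully proves the present proposition, but it would need to be supplemented if one wanted it to replace Lemma \ref{Wronskian} wholesale.
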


Proposition \ref{phifinite} establishes an adequate context to define our codes from the matrix $A$. It is worth to mention that the computation of a cyclic vector $\alpha$ can be randomized and does not require the computation of $K^{\varphi_u}$ (see Remark \ref{azar} in Section \ref{Ejemplos}).

\begin{definition}\label{Codefined}
Given $2 \leq d \leq m$, define the $K$--linear code $C_{(\varphi_u, \alpha, d)} \subseteq K^m$  of dimension $m-d+1$ as the left kernel of the matrix 
\[
H = \left( \begin{array}{ccccc}
\alpha & \varphi_u(\alpha)& & \cdots &\varphi_u^{d-2}(\alpha)  \\
\varphi_u(\alpha)& \varphi_u^{2}(\alpha)&& \cdots & \varphi_u^{d-1}(\alpha)  \\
\vdots &\vdots && \ddots& \vdots\\
\varphi_u^{m-1}(\alpha)& \varphi_u^{m}(\alpha) && \cdots & \varphi_u^{m+d-3}(\alpha)
\end{array} \right)
\]
\end{definition}

It can be proved (see Theorem \ref{MDS} below) that the minimum Hamming distance of this code is $d$, so it is an MDS code. We will call these codes \emph{Reed Solomon (RS) skew-differential codes.} It is worth to mention that when $K$ is either a finite field or a rational function field over a finite field, its dimension as a $K^{\varphi_{\raiz}}$--vector space is finite for any choice of the skew derivation $(\sigma,\delta)$, of the element $\raiz$ and of the cyclic vector $\alpha$ (see Section \ref{Ejemplos}). Indeed, $m$ is always equal to the order $m$ of the automorphism $\sigma$, when the latter is not the identity map.

Next, let us describe the decoding algorithm for $C_{(\varphi_u, \alpha, d)}$, that corrects up to $\tau = \lfloor \frac{d-1}{2} \rfloor$ errors.  Suppose that we receive a word $$y =(y_0, \dots, y_{m-1}) \in K^m$$ with   $ y= c + e \in K^m$, where $c$ is a codeword, and 
$$e = (e_0, \dots, e_{m-1})$$ is an error vector, which is assumed to be nonzero in the discussion below. Suppose that the nonzero components $e_{k_1}, \dots, e_{k_v} \in K$ of $e$ occur at the positions $0 \leq k_1 < \dots < k_v \leq m-1$. We assume that $v \leq \tau$. 

We start by computing, for $i = 0, \dots, d-2 $,  the syndromes
\begin{equation}\label{sindromes0}
S_{i,0} = \sum_{j=0}^{m-1} y_j\varphi_u^{i+j}(\alpha), 
\end{equation}
which are the components of the vector $yH$. 

For every pair $i, k$ of nonnegative integers such that $i+k \leq 2\tau-1$ we may compute $S_{i,k} \in K$ recursively from \eqref{sindromes0} according to the rule
\begin{equation}\label{sindromes1}
S_{i,k+1} = \sigma^{-1}(\delta(S_{i,k})-S_{i+1,k}).
\end{equation}
We may thus compute the columns of the matrix
\[
S = \left( \begin{array}{ccccc}
S_{0,0} & S_{0,1} && \cdots & S_{0,\tau - 1} \\
S_{1,0} & S_{1,1} && \cdots & S_{1,\tau - 1}  \\
\vdots &\vdots && \ddots& \vdots\\
S_{\tau,0} & S_{\tau,1} && \cdots &  S_{\tau,\tau - 1}
\end{array} \right).
\]

Next, for $1 \leq r \leq \tau$, let $S_r$ denote the matrix formed by the $r$ first columns of $S$ and compute
\[
\theta = \max \{ r  : rank \, S_r = r \}.
\]
\begin{proposition}\label{kernelrho}
The left kernel of the matrix
\[
B = \left( \begin{array}{ccccc}
S_{0,0} & S_{0,1} && \cdots & S_{0,\theta - 1} \\
S_{1,0} & S_{1,1} && \cdots & S_{1,\theta - 1}  \\
\vdots &\vdots && \ddots& \vdots\\
S_{\theta,0} & S_{\theta,1} && \cdots &  S_{\theta,\theta - 1}
\end{array} \right)
\]
is a one dimensional vector subspace of $K^{\theta + 1}$ spanned by a vector  $\rho = (\rho_0, \dots, \rho_\theta)$ with $\rho_\theta \neq 0$. 
\end{proposition}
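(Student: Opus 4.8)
The plan is to reduce the whole statement to a single clean assertion about one square submatrix, and then to establish that assertion using the skew‑polynomial description of the syndromes. Write $B_0$ for the leading $\theta\times\theta$ block of $S$, i.e. $B_0=(S_{i,k})_{0\le i,k\le \theta-1}$, so that $B$ is $B_0$ with one extra row appended. The observation is that the entire proposition follows once we know that $B_0$ is invertible. Indeed, if $\det B_0\neq0$ then $B$ already has rank $\theta$ (it contains an invertible $\theta\times\theta$ block and has only $\theta$ columns), so its left kernel has dimension $(\theta+1)-\theta=1$; moreover, if some nonzero $\rho=(\rho_0,\dots,\rho_\theta)$ in this kernel had $\rho_\theta=0$, then $(\rho_0,\dots,\rho_{\theta-1})$ would be a nonzero left annihilator of $B_0$, contradicting $\det B_0\neq0$. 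Hence $\rho_\theta\neq0$, and the whole content is concentrated in the invertibility of $B_0$.

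To analyse $S$ I would first use the recursion \eqref{sindromes1} together with $cH=0$ to rewrite the syndromes purely in terms of the error. Since $\sum_j c_j\varphi_\raiz^{i+j}(\alpha)=0$, we get $S_{i,0}=\sum_{l=1}^{v}e_{k_l}\varphi_\raiz^{\,i+k_l}(\alpha)$, and unwinding \eqref{sindromes1} expresses each $S_{i,k}$ as a $K$–linear combination (with coefficients built from $\sigma$, $\delta$ and $\raiz$) of the quantities $\varphi_\raiz^{\,i+k_l}(\alpha)$. Working inside the ring $\mathcal R\cong K[x;\sigma,\delta]/(\mu)$ of Proposition \ref{isopoly}, in which $\varphi_\raiz$ is the image of $x$ and the relation $\partial a=\sigma(a)\partial+\delta(a)$ holds, this bookkeeping becomes the statement that the columns of $S$ are the coordinates of the successive shifts $x^{k}\epsilon$ of a single error element $\epsilon$ attached to the data $(e_{k_l},\varphi_\raiz^{k_l}(\alpha))$. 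In this language $S$ factors as $S=\Phi\,W$, where $\Phi=(\varphi_\raiz^{\,i}(\beta_l))_{i,l}$ is a non‑commutative Vandermonde matrix in the distinct points $\beta_l=\varphi_\raiz^{k_l}(\alpha)$ and $W$ is a Wronskian‑type matrix carrying the error values.

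The two factors are governed by the results of \cite{Delenclos/Leroy:2007}: because $\alpha$ is a cyclic vector of $\varphi_\raiz$ (Proposition \ref{phifinite}) and the positions $k_1<\dots<k_v$ are distinct, $\Phi$ has full column rank $v$, so $\operatorname{rank}S=\operatorname{rank}W$, and (since the columns of $S$ form the orbit of $\epsilon$ under multiplication by $x$) this common value is exactly the integer $\theta$ detected by the rule $\theta=\max\{r:\operatorname{rank}S_r=r\}$. The heart of the argument is then to promote ``the first $\theta$ columns of $S$ are linearly independent'' to ``the leading block $B_0$ is invertible''. This is where the twisted Hankel shape forced by \eqref{sindromes1} is essential: a vanishing left combination of the first $\theta$ rows of $B_0$ would, after propagating through the recursion, produce a skew polynomial of degree $<\theta$ annihilating $\epsilon$, i.e. a linear relation among $x^{0}\epsilon,\dots,x^{\theta-1}\epsilon$, contradicting the independence of the first $\theta$ columns. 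Thus $B_0$ is invertible and, by the reduction above, $\rho$ is unique up to scalar with $\rho_\theta\neq0$. I would make this last step precise through Proposition \ref{Prop 3} and Theorem \ref{locatordiv}, which identify $\rho$ with the coefficient vector of the minimal (locator) skew polynomial of $\epsilon$, of degree exactly $\theta$ and hence with nonzero leading coefficient.

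The step I expect to be the main obstacle is precisely this bridge from column independence to nonsingularity of the leading principal block. For a general matrix the two are unrelated, and it is only the recurrence \eqref{sindromes1} — equivalently, the interpretation of the columns as the orbit of $\epsilon$ under multiplication by $x$ in $\mathcal R$ — that forces the maximal independent initial segment of columns to sit over an invertible leading block. Getting this propagation argument right, and in particular checking that the extra $\delta$–terms in \eqref{sindromes1} (absent in the classical Hankel case) do not spoil it, is the delicate point; the Vandermonde full‑rank input from \cite{Delenclos/Leroy:2007} is what guarantees that no spurious rank drop can enter through $\Phi$.
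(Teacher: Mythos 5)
Your proposal is correct and is in substance the paper's own proof: your factorization $S=\Phi W$ is the paper's $S_r=ME_r$ (Lemma \ref{lematheta}), the full column rank of $\Phi$ is Lemma \ref{Wronskian}, and the ``propagation through the recursion'' that turns a left relation on $B_0$ into a nonzero skew polynomial of degree $<\theta$ in the left kernel of $A_TE_\theta$ --- impossible because that kernel is the ideal $\mathcal{R}\rho$ with $\deg\rho=\theta$ --- is exactly Proposition \ref{Prop 3}, which the paper likewise invokes. Your reduction to invertibility of the leading block $B_0$ is an equivalent repackaging of the statement (the paper instead reads the kernel of $B$ directly off Proposition \ref{Prop 3}), with the only caveat that the final contradiction is with the degree of the generator $\rho$, equivalently with $\mathrm{rank}\,A_TE_\theta=\theta$, rather than literally with ``a relation among the columns $x^k\epsilon$'' as your gloss suggests.
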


The next step is the localization of the positions $k_1, \dots, k_v \in \{0, \dots, m-1\}$ at which the error values $e_{k_1}, \dots, e_{k_v}$ appear. This will be done with the help of a locator matrix built as follows. 

For $j = 0, \dots, m-1$  and $i = 0, \dots m-\theta-1$, set 
\begin{equation}\label{l0}
l_{0,j} = \begin{cases}
\rho_j & \text{if } j= 0, \dots, \theta \\
0 & \text{if } j = \theta + 1, \dots, m-1 \end{cases},  \qquad l_{i,-1} = 0.
 \end{equation}
We may then construct a matrix 
\begin{equation}\label{L}
L = \left( \begin{array}{ccccc}
l_{0,0} & l_{0,1} && \cdots & l_{0,m- 1} \\
l_{1,0} & l_{1,1} && \cdots & l_{1,m - 1}  \\
\vdots &\vdots && \ddots& \vdots\\
l_{m-\theta-1,0} & l_{m-\theta-1,1} && \cdots &  l_{m-\theta -1,m - 1}
\end{array} \right)
\end{equation}
by defining its entries recursively as
\begin{equation}\label{l1}
l_{i+1,j} = \sigma(l_{i,j-1}) + \delta(l_{i,j}).
\end{equation}

For $i= 0, \dots, m-1$ let $\epsilon_i$ denote the vector of $K^m$ whose $i$--th component equal to $1$, and every other component is $0$. By $Row(LA)$ we denote the row space of the matrix $LA$. 

\begin{theorem}\label{Main}
The error positions $k_1, \dots, k_v$ are, precisely, those $$k \in \{0,\dots,m-1\}$$ such that $\epsilon_k \notin Row(LA)$.  

The error values $e_{k_1}, \dots, e_{k_v} \in K$ are the unique solution of the linear system
\[
 S_{i,0} = \sum_{j=1}^v e_{k_j}\varphi_u^{i + k_j}(\alpha), \qquad (0 \leq i \leq v-1).
 \]
\end{theorem}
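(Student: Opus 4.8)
The plan is to translate everything into the skew polynomial ring $\mathcal{R}\cong K[x;\sigma,\delta]/(g)$, where $g$ is the minimal polynomial of $\varphi_u$ and $x$ is identified with $\varphi_u$ (Proposition~\ref{isopoly}). First I would observe that the recursion \eqref{l1} is precisely the rule for left multiplication by $x$ in $K[x;\sigma,\delta]$, so that the $t$-th row of $L$ is the coordinate vector of $x^t\rho(x)$, where $\rho(x)=\sum_{j=0}^{\theta}\rho_j x^j$ has degree $\theta$ because $\rho_\theta\neq0$ by Proposition~\ref{kernelrho}. Next I would record that right multiplication by $A$ realizes the coordinate isomorphism $\Phi$ sending the coordinate vector of a polynomial $f$ of degree $<m$ to $\big(f(\varphi_u)(\varphi_u^i(\alpha))\big)_{i=0}^{m-1}$; it is a $K$-linear bijection because $\alpha$ is a cyclic vector, so $A$ is invertible by Proposition~\ref{phifinite}. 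Writing $w_i=\rho(\varphi_u)(\varphi_u^i(\alpha))$, i.e. $(w_0,\dots,w_{m-1})=\rho A$, the product rule in $\mathcal{R}$ gives that the $t$-th row of $LA$ is $\big(\varphi_u^t(w_i)\big)_{i=0}^{m-1}$. Since the polynomials $x^t\rho$ have pairwise distinct degrees, the rows of $L$ are $K$-linearly independent, whence $\dim Row(LA)=m-\theta$.

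For the characterization of the error positions I would argue that, $A$ being invertible, $\epsilon_k\in Row(LA)$ if and only if $\epsilon_kA^{-1}\in Row(L)$. Now $\epsilon_kA^{-1}$ is the coordinate vector of the unique polynomial $G_k$ of degree $<m$ with $G_k(\varphi_u)(\varphi_u^i(\alpha))=\delta_{ik}$, while $Row(L)$ is exactly the set of coordinate vectors of the left multiples $h(x)\rho(x)$ of degree $<m$. Hence the first assertion is equivalent to the statement that $\rho$ right-divides $G_k$ in $K[x;\sigma,\delta]$ if and only if $k\notin\{k_1,\dots,k_v\}$. To close this I would use the closed form of the syndrome table (Proposition~\ref{Prop 3}) together with Proposition~\ref{kernelrho} and Theorem~\ref{locatordiv} to identify $\rho$ with the error-locator polynomial of degree $v$ (so that $\theta=v$) and to show that $\Phi(\mathcal{R}\rho)$ is the coordinate subspace spanned by $\{\epsilon_k:k\notin\{k_1,\dots,k_v\}\}$, i.e. the words supported off the error positions. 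Granting this, $Row(LA)=\Phi(\mathcal{R}\rho)$ is an $(m-v)$-dimensional space containing the $m-v$ independent vectors $\epsilon_k$ with $k$ a non-error position, so it coincides with their span; consequently $\epsilon_k\in Row(LA)$ exactly for the non-error positions, which is the claim.

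The main obstacle is precisely this identification of $\rho$ as the error locator, equivalently the computation of $\Phi(\mathcal{R}\rho)$. The delicate point is to show that the vanishing conditions imposed by $\rho B=0$ force $\rho(\varphi_u)$ to behave on the error evaluators $\varphi_u^{k_l}(\alpha)$ in the prescribed way, with no spurious coincidences at non-error positions. Here I expect to lean on the nonsingularity of the non-commutative Vandermonde and Wronskian matrices of \cite{Delenclos/Leroy:2007}, applied to the family $\{\varphi_u^{k_l}(\alpha)\}$, which is $K^{\varphi_u}$-linearly independent as part of the cyclic basis $\{\varphi_u^i(\alpha)\}_{i=0}^{m-1}$.

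Finally, for the error values I would note that, once the positions $k_1,\dots,k_v$ are known, the displayed system has coefficient matrix $\big(\varphi_u^{i+k_j}(\alpha)\big)_{0\le i\le v-1,\,1\le j\le v}$, which is the Wronskian matrix of the elements $\varphi_u^{k_1}(\alpha),\dots,\varphi_u^{k_v}(\alpha)$ with respect to $(\sigma,\delta)$. These elements belong to the cyclic basis $\{\varphi_u^i(\alpha)\}$, hence are $K^{\varphi_u}$-linearly independent, so by the Wronskian nonsingularity criterion of \cite{Delenclos/Leroy:2007} the matrix is invertible and the system has a unique solution, necessarily $e_{k_1},\dots,e_{k_v}$. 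This last step is routine once the preceding identification is in place.
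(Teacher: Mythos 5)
Your reduction of the first claim to a statement about the left ideal $\mathcal{R}\rho$ is sound and runs parallel to the paper: the recursion \eqref{l1} does make the $t$-th row of $L$ equal to $\mathfrak{v}(x^t\rho)$, right multiplication by $A$ is the evaluation map, and $\epsilon_k\in Row(LA)$ if and only if $\epsilon_kA^{-1}\in\mathfrak{v}(\mathcal{R}\rho)$. The gap is in how you close the argument: you propose to show that $\theta=v$, that $\rho$ is an ``error locator of degree $v$'', and that $Row(LA)$ is the coordinate subspace spanned by $\{\epsilon_k:k\notin T\}$. None of these holds in general, and the paper's own worked example refutes them: for the $[8,4,5]_{256}$ code with error vector $e=(0,a^{2},0,a^{2},0,0,0,0)$ one has $v=2$ but $\theta=1$, because the two rows of $E_2$ coincide (equal error values), so $\mathrm{rank}\, E_2=1<2$. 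There $\dim Row(LA)=m-\theta=7>6=m-v$, and indeed $\epsilon_1+\epsilon_3\in Row(LA)$ while $\epsilon_1,\epsilon_3\notin Row(LA)$; thus $Row(LA)$ strictly contains the span of the non-error coordinate vectors, and your dimension count (``an $(m-v)$-dimensional space containing $m-v$ independent vectors, hence equal to their span'') collapses. The degenerate case $\theta<v$ is not exotic, and any proof must handle it. Note also that Theorem \ref{locatordiv}, which you cite as an ingredient, is essentially the statement to be proven (with $\theta$ and $\rho$ defined from the error data $E_r$); what a proof of Theorem \ref{Main} must add is the bridge identifying those data with the quantities the algorithm computes from the syndromes, which is the content of Lemma \ref{lematheta} and Proposition \ref{kernelrho}.

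What is actually needed, and what the paper proves in Proposition \ref{Prop 3}, is that the left kernel of $A_TE_\theta$ equals $\mathfrak{v}(\mathcal{R}\rho)$; its proof is a computation with the recursion \eqref{Sik+1} and the minimal polynomial $\mu$, not an appeal to Wronskian nonsingularity (which enters only in Lemma \ref{lematheta}, to equate the syndrome-based $\theta$ with the $E_r$-based one, and in Proposition \ref{errorsol}). Granting this, $Row(LA)=\ker(\cdot\, I_TE_\theta)$, where $I_T$ consists of the columns of the identity matrix at the positions of $T$, and the membership test is pointwise, with no dimension count: for $k=k_j\in T$, the vector $\epsilon_kI_TE_\theta$ is the $j$-th row of $E_\theta$, which is nonzero because its first entry is $e_{k_j}\neq 0$, so $\epsilon_k\notin Row(LA)$; for $k\notin T$, $\epsilon_kI_T=0$, so $\epsilon_k\in Row(LA)$. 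This argument is insensitive to whether $\theta=v$. By contrast, your final paragraph on the error values is correct and is exactly the paper's Proposition \ref{errorsol}: the coefficient matrix is $W(\varphi_u^{k_1}(\alpha),\dots,\varphi_u^{k_v}(\alpha))^{t}$, invertible by Lemma \ref{Wronskian} via \cite{Delenclos/Leroy:2007}.
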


We are now ready to specify our decoding algorithm. 

\begin{algorithm}\label{algoritmo} \textsf{Decoding algorithm for an RS 
skew-differential code $C_{(\varphi_u,\alpha,d)}$.}

The input is a received word $y =(y_0, \dots, y_{m-1}) \in K^m$ with no more than  $\tau=\lfloor \frac{d-1}{2} \rfloor$ errors. 

The output is an error vector $e = (e_0, \dots, e_{m-1}) \in K^m$ such that $y-e \in C_{(\varphi_u, \alpha, d)}$. 

The algorithm runs according to the following steps:

\begin{enumerate}[1.] 
\item Compute $S_{i,0}$ according to \eqref{sindromes0} for $i=0,\dots, d-2$. If $S_{i,0} = 0$ for every $i = 0, \dots, d-2$, then $e = 0$.
\item Compute recursively $S_r$ for $r \geq 2$ by means of \eqref{sindromes1} until $rank \,  S_r < r$. Set $\theta = r-1$. 
\item Compute a nonzero $\rho = (\rho_0, \dots, \rho_\theta)$ in the kernel of the matrix $B$ formed by the first $\theta + 1$ rows of $S_\theta$.
\item Compute the matrix $L$ according to \eqref{l0} and \eqref{l1}.
\item The error positions set $T = \{ k_1, \dots, k_v \}$ is determined by
\[
T = \{ k \in \{0, \dots, m-1\} : \epsilon_k \notin Row(LA) \}.
\]
\item The error values $e_{k_1}, \dots, e_{k_v}$ are the solutions of the linear system 
\[ \sum_{j=0}^{m-1} y_j\varphi_u^{i+j}(\alpha) = \sum_{j=1}^v e_{k_j}\varphi_u^{i + k_j}(\alpha), \qquad (0 \leq i \leq v-1).
\]
\item Set $e_i = 0 $ for $i \notin T$. The error is $e=(e_0, \dots, e_{m-1})$. 
\end{enumerate}

\begin{remark}
The location of the error positions from the matrix $LA$ in the fifth step of Algorithm \ref{algoritmo} can be done by several methods. For instance, one may compute the reduced row echelon form of $LA$, as we do in the examples exhibited in Section \ref{Ejemplos}.
\end{remark}
\end{algorithm}

\section{Examples}\label{Ejemplos}
Before giving concrete examples, we discuss under which conditions the dimension of $K$ as a $K^{\varphi_\raiz}$--vector space is finite.  Of course, we also would like to avoid the extreme case where $K = K^{\varphi_u}$, so we also consider this situation. We first record separately an alternative description of $K^{\varphi_u}$, which will used several times. Keep the notation introduced in Section \ref{Frontispicio}.

\begin{lemma}\label{invconj}
The subfield $K^{\varphi_u}$ of $K$ admits the following description:
\[
K^{\varphi_u} = \{ b \in K \, | \, \sigma(b)\raiz + \delta(b) = \raiz b \}.
\]
\end{lemma}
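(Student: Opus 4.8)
The plan is to establish the claimed equality $K^{\varphi_u} = \{ b \in K \mid \sigma(b)u + \delta(b) = ub \}$ by unwinding the definition of $K^{\phi}$ in the special case $\phi = \varphi_u$ and exploiting the Leibniz-type rule \eqref{torcida} for the skew derivation. Recall that by definition $K^{\varphi_u} = \{ b \in K : \varphi_u(ab) = \varphi_u(a)b \text{ for all } a \in K \}$. The first step is to compute $\varphi_u(ab)$ explicitly using the definition $\varphi_u(x) = \sigma(x)u + \delta(x)$ together with the multiplicativity of $\sigma$ and the twisted Leibniz rule for $\delta$. This gives
\[
\varphi_u(ab) = \sigma(ab)u + \delta(ab) = \sigma(a)\sigma(b)u + \sigma(a)\delta(b) + \delta(a)b = \sigma(a)\bigl(\sigma(b)u + \delta(b)\bigr) + \delta(a)b,
\]
where I have grouped the terms containing the factor $\sigma(a)$.

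Next I would compute the other side: $\varphi_u(a)b = \bigl(\sigma(a)u + \delta(a)\bigr)b = \sigma(a)ub + \delta(a)b$. Subtracting, the defining condition $\varphi_u(ab) = \varphi_u(a)b$ becomes, after cancelling the common term $\delta(a)b$,
\[
\sigma(a)\bigl(\sigma(b)u + \delta(b)\bigr) = \sigma(a)\, ub \qquad \text{for all } a \in K.
\]
Thus $b \in K^{\varphi_u}$ if and only if $\sigma(a)\bigl(\sigma(b)u + \delta(b) - ub\bigr) = 0$ holds for every $a \in K$.

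The final step is to eliminate the quantifier over $a$. Since $\sigma$ is a field automorphism, it is in particular surjective, so as $a$ ranges over $K$ the element $\sigma(a)$ ranges over all of $K$; in particular one may take $a = \sigma^{-1}(1)$ to obtain $\sigma(a) = 1$. Because $K$ is a field (and hence has no zero divisors), the equation $\sigma(a)\bigl(\sigma(b)u + \delta(b) - ub\bigr) = 0$ for all $a$ is equivalent to the single equation $\sigma(b)u + \delta(b) - ub = 0$, that is, $\sigma(b)u + \delta(b) = ub$. This yields exactly the asserted description of $K^{\varphi_u}$.

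This argument is essentially a direct calculation, so I do not anticipate any genuine obstacle; the only point requiring a moment's care is the passage from the universally quantified condition to the single equation, where one must invoke that $\sigma$ is onto (to realize the value $\sigma(a) = 1$) and that $K$ has no zero divisors. Everything else follows from the definitions of $\varphi_u$, of $K^{\varphi_u}$, and the skew-Leibniz identity \eqref{torcida}.
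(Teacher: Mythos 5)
Your proof is correct and follows essentially the same route as the paper, which simply records the two identities $\varphi_u(ab) = \sigma(a)\sigma(b)u + \sigma(a)\delta(b) + \delta(a)b$ and $\varphi_u(a)b = \sigma(a)ub + \delta(a)b$ and leaves the comparison implicit; you have merely spelled out the cancellation and the quantifier elimination. (One tiny simplification: instead of $a = \sigma^{-1}(1)$ you can just take $a = 1$, since any field automorphism fixes $1$.)
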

\begin{proof}
It follows from the identities
\[
\varphi_u(ab) = \sigma(a)\sigma(b)\raiz + \sigma(a)\delta(b) + \delta(a)b
\]
and
\[
\varphi_u(a)b = \sigma(a)\raiz b + \delta(a)b,
\]
valid for any $a, b \in K$.
\end{proof}

\begin{proposition}\label{genejemplos}
\begin{enumerate}
\item\label{KKfi} The equality $K = K^{\varphi_u}$ holds if and only if $\delta(a) = \raiz(a - \sigma(a))$ for every $a \in K$.
\item\label{KKnofi} If $K \neq K^{\varphi_u}$ then
\[
K^{\varphi_u} = 
\begin{cases}
K^\delta & \text{ if }  \sigma = id_K \\
K^\sigma & \text{ if }  \sigma \neq id_K.
\end{cases}
\]
\end{enumerate}
\end{proposition}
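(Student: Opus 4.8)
The plan is to rephrase the membership condition supplied by Lemma \ref{invconj} in terms of the \emph{inner} $\sigma$-derivation attached to $\raiz$, and then reduce both parts to one structural fact about $\sigma$-derivations of the commutative field $K$. First I would set
\[
\delta_\raiz(b) = \raiz\bigl(b - \sigma(b)\bigr) \qquad (b \in K),
\]
and observe, using commutativity of $K$, that the rearranged condition $\delta(b) = \raiz b - \sigma(b)\raiz$ of Lemma \ref{invconj} reads exactly as $K^{\varphi_u} = \{ b \in K : \delta(b) = \delta_\raiz(b)\} = \ker f$, where $f := \delta - \delta_\raiz$. A short computation from \eqref{torcida} (checking first that $\delta_\raiz$ itself obeys the same Leibniz-type rule) shows that $f$ is again a $\sigma$-derivation, i.e. $f(ab) = \sigma(a)f(b) + f(a)b$ for all $a,b \in K$. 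With this language, part \eqref{KKfi} is immediate: $K = K^{\varphi_u}$ means $f \equiv 0$, that is $\delta = \delta_\raiz$, which is precisely $\delta(a) = \raiz(a-\sigma(a))$ for every $a \in K$. I would record this first, as it also identifies the hypothesis $K \neq K^{\varphi_u}$ of part \eqref{KKnofi} with the statement that $f$ is a \emph{nonzero} $\sigma$-derivation.

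For part \eqref{KKnofi} I would split on $\sigma$. When $\sigma = id_K$ one has $\delta_\raiz = 0$, so $f = \delta$ and therefore $K^{\varphi_u} = \ker\delta = K^\delta$ with no further work. The substantive case is $\sigma \neq id_K$, where I claim $f$ is inner: there is $w \in K$ with $f = \delta_w$, that is $f(b) = w(b-\sigma(b))$ for all $b$. To produce $w$, choose $a_0 \in K$ with $\sigma(a_0) \neq a_0$ and set $w = f(a_0)/(a_0 - \sigma(a_0))$. Then $h := f - \delta_w$ is a $\sigma$-derivation vanishing at $a_0$, and evaluating the Leibniz rule on the identity $a_0 b = b a_0$ yields $\sigma(a_0)h(b) = h(b)a_0$, whence $(\sigma(a_0)-a_0)h(b) = 0$ and thus $h(b) = 0$ for every $b$; hence $f = \delta_w$. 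Since $f \neq 0$ forces $w \neq 0$, I would conclude $K^{\varphi_u} = \ker f = \{ b \in K : b - \sigma(b) = 0\} = K^\sigma$.

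The main obstacle is exactly the inner-ness claim in the case $\sigma \neq id_K$; the rest is bookkeeping. The crux is to exploit commutativity of $K$ through $a_0 b = b a_0$ so as to turn the two expansions of the Leibniz rule into the scalar relation $(\sigma(a_0)-a_0)h(b) = 0$, which collapses $h$ the moment $a_0$ is chosen outside the fixed field $K^\sigma$. I would take care to note that no verification of independence of the choice of $a_0$ is needed, since the displayed computation already establishes $f - \delta_w = 0$ for the specific $w$ constructed.
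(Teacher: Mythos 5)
Your proof is correct, and while part \eqref{KKfi} is argued exactly as in the paper (both are immediate rearrangements of Lemma \ref{invconj}, using commutativity to rewrite the membership condition as $\delta(b) = u(b-\sigma(b))$), your treatment of part \eqref{KKnofi} takes a genuinely different route. The paper applies \eqref{torcida} to the identity $\delta(ab)=\delta(ba)$ to obtain $\delta(b)(\sigma(a)-a)=\delta(a)(\sigma(b)-b)$ (its equation \eqref{Lie}), deduces from it the intermediate equality $K^{\varphi_u}=K^\sigma\cap K^\delta$ whenever $K\neq K^{\varphi_u}$, and then resolves the intersection by showing $K^\sigma\subseteq K^\delta$ when $\sigma\neq id_K$. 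You instead pass to the difference $f=\delta-\delta_u$, where $\delta_u : b \mapsto u(b-\sigma(b))$, observe that $f$ is again a $\sigma$-derivation whose vanishing set is exactly $K^{\varphi_u}$, and reduce the case $\sigma\neq id_K$ to the statement that a nonzero $\sigma$-derivation is inner: your construction of $w$ and the collapse of $h=f-\delta_w$ via $\sigma(a_0)h(b)=h(b)a_0$ is precisely the known fact --- which the paper records immediately \emph{after} this proposition, citing \cite[Proposition 1.1.20]{Jacobson:1996} and noting it follows from \eqref{Lie} --- that every $\sigma$-derivation of the commutative field $K$ has the form \eqref{inder} (up to a sign convention). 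So both arguments run on the same engine, the twisted Leibniz rule evaluated on $ab=ba$, but the decompositions differ: yours buys a structural explanation ($K^{\varphi_u}$ is the kernel of a $\sigma$-derivation, the case $\sigma=id_K$ is free since there $\delta_u=0$, and the dichotomy becomes an instance of the classification of $\sigma$-derivations) at the cost of constructing $w$; the paper's avoids inner-ness altogether and yields the uniform formula $K^{\varphi_u}=K^\sigma\cap K^\delta$ (its \eqref{Kinter}) as a by-product. One cosmetic caveat: you silently identify $K^\delta$ with $\ker\delta$ and $K^\sigma$ with the fixed field of $\sigma$; under the paper's definition of $K^\phi$ both identifications are true but merit a one-line check (e.g.\ $\delta(ab)=\delta(a)b$ for all $a$ forces $\sigma(a)\delta(b)=0$, i.e.\ $\delta(b)=0$).
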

\begin{proof}
Statement \eqref{KKfi} follows immediately  from Lemma \ref{KKfi}.

As for statement \eqref{KKnofi} concerns, let us first observe that, since $\delta(ab) = \delta(ba)$ for every $a, b \in K$, 
\begin{equation}\label{Lie}
\delta(b) (\sigma(a)-a) = \delta(a)(\sigma(b) -b),
\end{equation}
by virtue of \eqref{torcida}. Now, since $K \neq K^{\varphi_u}$, we pick $a \in K \setminus K^{\varphi_u}$. If $b \in K^{\varphi_u}$, then 
\[
\delta(a)(\sigma(b) - b) = \delta(b)(\sigma(a) - a) =(b - \sigma(b))\raiz(\sigma(a) -a),
\]
which is only possible if $\sigma(b) - b = 0$, as $\delta(a) \neq \raiz (\sigma(a) -a)$. Therefore, $b \in K^\sigma$ and, henceforth, $\delta(b) =\raiz(b-\sigma(b)) = 0$. We thus get that $K^{\varphi_u} \subseteq K^\sigma \cap K^\delta$. The converse inclusion is easily checked, so that we obtain
\begin{equation}\label{Kinter}
K^{\varphi_u} =  K^\sigma \cap K^\delta.
\end{equation}
If $\sigma = id_K$, then \eqref{Kinter} obviously implies $K^{\varphi_u} = K^\delta$. If $\sigma \neq id_K$, we may already pick $a \in K \setminus K^\sigma$. Then, for $b \in K^\sigma$ we get form \eqref{Lie}, that 
\[
\delta(b) (\sigma(a) - a) = \delta(a) (\sigma(b) - b) = 0.
\]
Hence, $\delta(b) = 0$ and $K^\sigma \subseteq K^\delta$, which implies, in view of \eqref{Kinter}, that $K^{\varphi_u} = K^\sigma$. 
\end{proof}

Given an automorphism $\sigma \neq id_K$ of the field $K$, and $v \in K$, we may define 
\begin{equation}\label{inder}
\delta_{\sigma, v} (a) = v(\sigma(a)-a), 
\end{equation}
for all $a \in K$, thus obtaining a map $\delta_{\sigma,v} : K \to K$ which is a $\sigma$--derivation. Indeed, it is already known (see, for instance, \cite[Proposition 1.1.20]{Jacobson:1996}), that every $\sigma$--derivation of the commutative field $K$ can be expressed in this form (this fact is easily derived from \eqref{Lie}). With this notation, we derive the following consequence of Proposition \ref{genejemplos}. 

\begin{corollary}\label{corejemplos}
Assume that $\sigma \neq id_K$ is an automorphism of $K$ of finite order $m$, and that $\delta = \delta_{\sigma,v}$. If $\raiz \neq - v$, then the dimension of $K$ over $K^{\varphi_\raiz}$ is $m$. 
\end{corollary}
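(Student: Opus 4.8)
The plan is to reduce the computation of $\dim_{K^{\varphi_\raiz}} K$ to the degree of the fixed-field extension $K/K^\sigma$, which is governed by classical Galois theory, and to use the already-established structure results of Proposition \ref{genejemplos} to carry out that reduction. The only genuine work is to confirm that the hypotheses of those results apply in the present situation.

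First I would check that the assumption $\raiz \neq -v$ rules out the degenerate case $K = K^{\varphi_\raiz}$. Inserting $\delta = \delta_{\sigma,v}$, that is $\delta(a) = v(\sigma(a)-a)$, into the criterion of statement \eqref{KKfi} of Proposition \ref{genejemplos}, the equality $K = K^{\varphi_\raiz}$ would force $v(\sigma(a)-a) = \raiz(a-\sigma(a))$ for every $a \in K$, i.e. $(\raiz + v)(\sigma(a)-a) = 0$ for all $a$. Since $\sigma \neq id_K$, some $a$ satisfies $\sigma(a) \neq a$, whence $\raiz + v = 0$, contradicting $\raiz \neq -v$. Thus $K \neq K^{\varphi_\raiz}$, and I may now apply statement \eqref{KKnofi} of Proposition \ref{genejemplos}: because $\sigma \neq id_K$, it yields $K^{\varphi_\raiz} = K^\sigma$.

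It then remains to identify $[K : K^\sigma]$. The cyclic group $\langle \sigma \rangle$ has order $m$ and $K^\sigma$ is exactly its field of fixed points, so Artin's theorem on the fixed field of a finite group of automorphisms gives $[K : K^\sigma] = |\langle \sigma \rangle| = m$. Combining this with $K^{\varphi_\raiz} = K^\sigma$ produces $\dim_{K^{\varphi_\raiz}} K = m$, as claimed. The argument is essentially a bookkeeping assembly of earlier results; the single non-elementary ingredient is Artin's theorem, so the only point requiring care is to verify that its hypotheses are cleanly met, namely that $\langle\sigma\rangle$ is finite (true by assumption) and that $K^\sigma$ is precisely its fixed field (true by definition). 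Both being immediate, I expect no real obstacle beyond correctly excluding the degenerate case in the first step.
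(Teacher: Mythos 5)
Your proof is correct and follows exactly the route the paper intends: the corollary is stated as an immediate consequence of Proposition \ref{genejemplos}, and your argument (using part \eqref{KKfi} with $\delta = \delta_{\sigma,v}$ to exclude $K = K^{\varphi_\raiz}$ via $\raiz \neq -v$, then part \eqref{KKnofi} to get $K^{\varphi_\raiz} = K^\sigma$, then Artin's theorem for $[K:K^\sigma] = m$) is precisely the derivation left implicit there. Nothing is missing.
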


\begin{remark}\label{azar}
In practice, the computation of one of the  cyclic vectors $\alpha$ predicted by Proposition \ref{phifinite} can be implemented by a randomized search in $K$ until the matrix $A$ becomes invertible. This avoids in most cases the computation of the subfield $K^{\varphi_\raiz}$, since the parameter $m$ is either the order of the automorphism $\sigma$ or, in the pure differential case of interest (namely $K = \mathbb{F}(t)$), the characteristic of the finite field $\mathbb{F}$.
\end{remark}

Next, we discuss how our construction applies to block and convolutional codes. We also illustrate the execution of our  decoding algorithm with some concrete examples.

\subsection{Block codes}\label{block}

Let us assume here that $K = \mathbb{F}$ is the finite field with $p^r$ elements for some prime $p$, so our codes become linear block codes over the alphabet $\mathbb{F}$. Every automorphism of $\mathbb{F}$ is a power of the Frobenius automorphism $\tau$ and, consequently, has finite order. Additionally, any derivation on $\mathbb{F}$ is inner, this is to mean, it is given by \eqref{inder}, so Corollary \ref{corejemplos} provides us plenty of non trivial examples. The steps of the design method of a RS skew-differential block code may be then enumerated as follows:
\begin{enumerate}
\item Choose a natural $0 < h < r$, and set $\sigma=\tau^h$ and $m=\frac{r}{(r,h)}$, the order of $\sigma$.
\item Choose $v$ and $u$ in $\mathbb{F}$, with $u +v\not = 0$, in order to set the $\sigma$-derivation $\delta:\mathbb{F}\to\mathbb{F}$ as $\delta(c)=v(\sigma(c)-c)$ and the additive map $\varphi_u$ as  $\varphi_u(c)=\sigma(c)u+\delta(c)$ for any $c\in \mathbb{F}$.
\item By a random search, find a cyclic vector $\alpha$ (see Remark \ref{azar}).
\item Finally, choose a designed distance $2\leq  d \leq m$, and set the parity check matrix $H$ as in Definition \ref{Codefined}.
\end{enumerate}
The degrees of freedom of this process suggest how wide this class of block codes is. Furthermore, RS skew-differential block codes are not cyclic, see Section \ref{SDmodule}. Nevertheless, Algorithm \ref{algoritmo} provides a decoding method as efficient as the classical Peterson-Gorenstein-Zierler algorithm.

Let us now describe a concrete example. Consider $\mathbb{F}=\mathbb{F}_2(a)$ the field with $256=2^8$ elements, where $a^{8} + a^{4} + a^{3} +a^{2} + 1=0$. For brevity, except for 0 and 1, we write the elements of  $\mathbb{F}$ as powers of $a$. Let  $\sigma$ be the Frobenius automorphism of $\mathbb{F}$, that is, $\sigma(c)=c^2$ for any $c\in \mathbb{F}$, which has order $m=8$.  Then, Corollary \ref{corejemplos} says that our code is of length $8$. We set $v=a$, yielding the $\sigma$-derivation given by $\delta(c)=ac^2+ac$ for every $c\in \mathbb{F}$, and $u=a^2$, so $\varphi_u (c) = a^{26}c^2 + ac$ for every $c\in \mathbb{F}$. 

We now choose $\alpha =  a^9$. The matrix $A$  from Proposition \ref{phifinite} takes now the form
\[A=\left(\begin{array}{llllllll}
a^{9} & a^{146} & a^{103} & a^{244} & a^{214} & a^{89} & a & a^{200} \\
a^{146} & a^{103} & a^{244} & a^{214} & a^{89} & a & a^{200} & a^{237} \\
a^{103} & a^{244} & a^{214} & a^{89} & a & a^{200} & a^{237} & a^{95} \\
a^{244} & a^{214} & a^{89} & a & a^{200} & a^{237} & a^{95} & a^{105} \\
a^{214} & a^{89} & a & a^{200} & a^{237} & a^{95} & a^{105} & a^{175} \\
a^{89} & a & a^{200} & a^{237} & a^{95} & a^{105} & a^{175} & a^{184} \\
a & a^{200} & a^{237} & a^{95} & a^{105} & a^{175} & a^{184} & a^{21} \\
a^{200} & a^{237} & a^{95} & a^{105} & a^{175} & a^{184} & a^{21} & a^{159}
\end{array}\right).\]
The determinant of $A$ equals $a^{47}$, so that $\alpha$ is a cyclic vector. Finally, we set a designed distance $d=5$. Let then $C = C_{(\varphi_u, a^9, 5)} \subseteq \mathbb{F}^8$ be the $[8,4,5]_{256}$-linear code defined as the left kernel of the following matrix $H$. From $H$, by standard methods, we have also computed a generating matrix $G$.
\[
H = \left(\begin{array}{llll}
a^{9} & a^{146} & a^{103} & a^{244}  \\
a^{146} & a^{103} & a^{244} & a^{214} \\
a^{103} & a^{244} & a^{214} & a^{89}   \\
a^{244} & a^{214} & a^{89} & a  \\
a^{214} & a^{89} & a & a^{200}  \\
a^{89} & a & a^{200} & a^{237}  \\
a & a^{200} & a^{237} & a^{95}  \\
a^{200} & a^{237} & a^{95} & a^{105}
\end{array}\right)  \text{ and }
G=\left(\begin{array}{llllllll}
1 & 0 & 0 & 0 & a^{105} & a^{69} & a^{221} & a^{41} \\
0 & 1 & 0 & 0 & a^{109} & a^{25} & a^{232} & a^{166} \\
0 & 0 & 1 & 0 & a^{145} & a^{54} & a^{104} & a^{36} \\
0 & 0 & 0 & 1 & a^{251} & a^{141} & a^{42} & a^{60}
\end{array}\right).
\]
The reader may refer to Section \ref{SDmodule} and Remark \ref{ncgenpol} for the explicit calculation of a non-commutative generator polynomial of $C$ so that the encoding can be performed in a similar way as for cyclic codes.

Let us exemplify the encoding-decoding process. We recall that the error-correcting capacity of $C$  is $\tau=2$. Suppose we want to transmit the message $$M = \left (a^{61}, a^{102}, a^{182}, a^{250} \right ),$$ so that we encode it to a codeword $$c = M G = \left (
a^{61}, a^{102}, a^{182}, a^{250}, a^{33}, a^{126}, a^{121}, a^{226}\right) \in C.$$
During the transmission, $c$ is corrupted by adding the error vector
\[e= \left(0, a^{2}, 0, a^{2}, 0, 0, 0, 0\right ),\]
yielding then the received word
\[y=c+e=
\left(a^{61}, a^{6}, a^{182}, a^{107}, a^{33}, a^{126}, a^{121}, a^{226}\right).\]
Now, we run Algorithm \ref{algoritmo}. We first calculate the syndromes
\[yH = \left ( a^{32}, a^{96}, a^{250}, a^{236} \right ),\]
so it is detected some error. The syndrome matrix is then
\[
S
=\left(\begin{array}{ll}
a^{32} & a^{3} \\
a^{96} & a^{67} \\
a^{250} & a^{221}
\end{array}\right).\]
The first column of $S$ is a multiple of its second column, so that $S$ has rank $1$ and, henceforth, $\theta = 1$.
Therefore, the matrix $B$  in Algorithm \ref{algoritmo} takes the form
\[B
=\left(\begin{array}{ll}
a^{32}  \\
a^{96} 
\end{array}\right).\]
and a basis of its left kernel is provided by the vector
\[\rho=
\left(a , a^{192}
\right).\]
The matrix $L$ defined in \eqref{L} becomes
\[L=\left(\begin{array}{llllllll}
a & a^{192} & 0 & 0 & 0 & 0 & 0 & 0 \\
a^{27} & a^{125} & a^{129} & 0 & 0 & 0 & 0 & 0 \\
a^{132} & a^{44} & a^{148} & a^{3} & 0 & 0 & 0 & 0 \\
a^{193} & a^{105} & a^{215} & a^{102} & a^{6} & 0 & 0 & 0 \\
a^{222} & a^{134} & a^{212} & a^{108} & a^{134} & a^{12} & 0 & 0 \\
a^{205} & a^{117} & a^{209} & a^{216} & a^{212} & a^{25} & a^{24} & 0 \\
a^{158} & a^{70} & a^{195} & a^{206} & a^{88} & a^{245} & a^{222} & a^{48}
\end{array}\right),\]
and $LA$ results
\[LA=\left(\begin{array}{llllllll}
a^{246} & a^{98} & a^{77} & a^{98} & a^{245} & a^{164} & a^{146} & a^{23} \\
a^{137} & a^{27} & a^{44} & a^{27} & a^{24} & a^{129} & a^{103} & a^{22} \\
a^{203} & a^{169} & a^{175} & a^{169} & a^{222} & a^{76} & a^{244} & a^{124} \\
a^{26} & a^{40} & a^{184} & a^{40} & a^{160} & a^{124} & a^{214} & a^{58} \\
a^{10} & a^{203} & a^{21} & a^{203} & a^{155} & a^{58} & a^{89} & a^{116} \\
a^{43} & a^{26} & a^{159} & a^{26} & a^{25} & a^{116} & a & a^{169} \\
a^{61} & a^{10} & a^{198} & a^{10} & a^{28} & a^{169} & a^{200} & a^{40}
\end{array}\right).\]
The identification of the positions $k \in \{0, 1 \dots, 7 \}$ such that $\epsilon_k \notin \mathrm{Row}(LA)$ can be easily done if we compute the row reduced echelon form of $LA$,
\[LA_{rref}=
\left(\begin{array}{llllllll}
1 & 0 & 0 & 0 & 0 & 0 & 0 & 0 \\
0 & 1 & 0 & 1 & 0 & 0 & 0 & 0 \\
0 & 0 & 1 & 0 & 0 & 0 & 0 & 0 \\
0 & 0 & 0 & 0 & 1 & 0 & 0 & 0 \\
0 & 0 & 0 & 0 & 0 & 1 & 0 & 0 \\
0 & 0 & 0 & 0 & 0 & 0 & 1 & 0 \\
0 & 0 & 0 & 0 & 0 & 0 & 0 & 1
\end{array}\right)\]
It is clear that $\epsilon_1$ and $\epsilon_3$ do not belong to $\mathrm{Row}(LA)$. Therefore, there are errors at positions 1 and 3. We finally need to solve a linear system in order to recover the error values. Indeed, the error
values are the solution of the system
\[\left(\begin{array}{rr}
a^{146} & a^{103} \\
a^{244} & a^{214}
\end{array}\right) \left(\begin{array}{r}
e_1\\
e_3
\end{array} \right)=\left(\begin{array}{rr}
a^{32} & a^{96}
\end{array}\right).\]
The solution is, as expected, $e_1=a^2$ and $e_3=a^2$.

\subsection{Convolutional codes}\label{convolutional}
Another case of interest is  $K = \mathbb{F}(t)$, the field of rational functions over a finite field $\mathbb{F}$.   Linear codes over $\mathbb{F}(t)$ are examples convolutional codes, see \cite{Forney:1970} for details. It is well-known that the group \(\mathrm{Aut}_{\mathbb{F}}{\mathbb{F}(t)}\) of all $\mathbb{F}$--linear automorphisms of the field $\mathbb{F}(t)$ can be identified with the projective general linear group \(\operatorname{PGL}(2,\mathbb{F})\) via the map $\Phi: \operatorname{PGL}(2,\mathbb{F}) \to \mathrm{Aut}_{\mathbb{F}}{\mathbb{F}(t)}$, which maps any matrix $M=\left (\begin{smallmatrix} \sigma_1 & \sigma_2 \\ \sigma_3 & \sigma_4 \end{smallmatrix}\right )\in \operatorname{PGL}(2,\mathbb{F})$ to the automorphism $\Phi(M)$ determined by the rule $t\mapsto \frac{\sigma_1 t + \sigma_2}{\sigma_3 t + \sigma_4}$. Every automorphism of $K$ has then finite order and,  on the other hand, the field of constants of any derivation of $\mathbb{F}(t)$ has finite index. Thus, Proposition \ref{genejemplos} says that virtually all choices of $\sigma$, $\delta$ and $u$ lead to non trivial RS skew-differential convolutional codes to which the decoding algorithm \ref{algoritmo} may be applied.

\begin{remark}
Algorithm \ref{algoritmo} deals with the Hamming metric, which is not the usual distance considered in convolutional codes. However, the use of Hamming distances in the convolutional setting might be of interest in the technology of distributed storage (see \cite[Sect. 2]{Gomez/alt:2019b}).
\end{remark}

Let us now detail a specific example. Let $\mathbb{F}=\mathbb{F}_2(a)$,  where $a^2+a+1=0$, the field with four elements and set $K=\mathbb{F}(t)$ the field of rational functions with coefficients in $\mathbb{F}$.   We shall follow likewise the construction method in Subection \ref{block}. 

As commented above, an automorphism of $K$ is determined by four elements $\sigma_1,\sigma_2,\sigma_3,\sigma_4$ in $\mathbb{F}$ verifying $\sigma_1\sigma_4-\sigma_2\sigma_3 \not = 0$. Set $\sigma_1=0$, $\sigma_2=1$, $\sigma_3 = 1$ and $\sigma_4 = a$ yielding the automorphism $\sigma:K\to K$ determined by $\sigma(t)=1/(t+a)$, which has order $m=5$. For simplicity, we fix $v=1$, so that $\delta(c)=\sigma(c)-c$ for any $c\in K$, and $u=0$, and then $\varphi_u = \delta$.  Now, consider $\alpha = t$. Since the matrix 
$$A=\left(\begin{array}{ccccc}
t & \frac{t^{2} + a t + 1}{t + a} & \frac{t^{2} + a t + 1}{ t + 1} & \frac{t^{4} + at^{3} + t^{2}}{ t^{3} + 1} & \frac{ t^{2} + a t + 1}{ t} \\ [0.8ex]
\frac{t^{2} + a t + 1}{t + a} & \frac{ t^{2} + a t + 1}{ t + 1} & \frac{ t^{4} + at^{3} +  t^{2}}{ t^{3} + 1} & \frac{ t^{2} + a t + 1}{t} & \frac{t^{2} + a t + 1}{a^2 t^{2} + t} \\ [0.8ex]
\frac{t^{2} + a t + 1}{ t + 1} & \frac{t^{4} + at^{3} +t^{2}}{ t^{3} + 1} & \frac{ t^{2} + a t + 1}{ t} & \frac{t^{2} + a t + 1}{a^2 t^{2} + t} & \frac{a^2 t^{4} + t^{3} + a t^{2} + a t + 1}{a^2 t^{3} + a t^{2} + t} \\ [0.8ex]
\frac{ t^{4} + at^{3} +  t^{2}}{ t^{3} +  1} & \frac{ t^{2} + at + 1}{ t} & \frac{t^{2} + a t + 1}{a^2 t^{2} + t} & \frac{a^2 t^{4} + t^{3} + a t^{2} + a t + 1}{a^2 t^{3} + a t^{2} + t} & \frac{ t^{2} +a t + 1}{a t^{2} +  t} \\ [0.8ex]
\frac{ t^{2} + a t + 1}{ t} & \frac{t^{2} + a t + 1}{a^2 t^{2} + t} & \frac{a^2 t^{4} + t^{3} + a t^{2} + a t + 1}{a^2 t^{3} + a t^{2} + t} & \frac{ t^{2} + a t + 1}{a t^{2} +  t} & \frac{t^{2} + a t + 1}{t + a + 1}
\end{array}\right)$$
is non-singular, we get from Proposition \ref{phifinite} that $\alpha$ is a cyclic vector for $\delta$.  Finally, the designed distance is selected to be $d=3$. So the skew-differential convolutional code $C=C_{(\delta,t,3)}$ can correct a single error, and a parity check matrix takes the form
$$
H=\left(\begin{array}{cc}
t & \frac{t^{2} + a t + 1}{t + a} \\ [0.8ex]
\frac{t^{2} + a t + 1}{t + a} & \frac{t^{2} + a t + 1}{t + 1} \\ [0.8ex]
\frac{t^{2} + a t + 1}{ t + 1} & \frac{ t^{4} +a^2 t^{3} +  t^{2}}{ t^{3} + 1} \\ [0.8ex]
\frac{ t^{4} +a^2 t^{3} + t^{2}}{ t^{3} +1} & \frac{ t^{2} + a t + 1}{ t} \\ [0.8ex]
\frac{ t^{2} + a t + 1}{ t} & \frac{t^{2} + a t + 1}{a^2 t^{2} + t}
\end{array}\right)
$$
Let us briefly exemplify our decoding algorithm. Suppose that we receive the word
$$y=\left(0,\,1,\,a^2,\,\frac{ t^{2} +  t}{a^2t^{2} +  t + 1},\,0\right),$$
whose matrix of syndromes is as follows:
$$S=\left(\begin{array}{c}
\frac{t^{3} + a t^{2} + t}{t^{4} + a t^{2} + a t + 1} \\ [0.8ex]
\frac{ t^{3} +a t^{2} + t}{a^2 t^{5} + t^{4} + t^{3} + a^2 t^{2} + t +  1}
\end{array}\right).$$
Henceforth, the system detects errors during the transmission. Clearly $\theta= 1$ and $B=S$, and the vector $\rho$ becomes $\rho=\left(1,at + 1\right)$. The matrix $L$ takes the form
$$
L=\left(\begin{array}{ccccc}
1 & a^2 t + 1 & 0 & 0 & 0 \\ [0.8ex]
0 & \frac{a^2 t^{2} + 1}{t + a} & \frac{t + 1}{t + a} & 0 & 0 \\ [0.8ex]
0 & \frac{t^{2} + a t + 1}{a t + a} & 1 & \frac{t + a + 1}{a t + a} & 0 \\ [0.8ex]
0 & \frac{ t^{4} + a t^{3} +  t^{2}}{a t^{3} + a } & \frac{t^{2} + a t + 1}{a t^{2} + a t + a} & \frac{1}{a^2 t^{2} + t + a} & \frac{a^2 t}{a t + 1}
\end{array}\right).
$$
We then compute $LA$ and its row reduced echelon form obtaining that
$$LA_{\mathrm{rref}}=
\left(\begin{array}{ccccc}
1 & 0 & 0 & 0 & 0 \\
0 & 1 & 0 & 0 & 0 \\
0 & 0 & 1 & 0 & 0 \\
0 & 0 & 0 & 1 & 0
\end{array}\right).$$
It is clear that $\epsilon_4$ does not belong to $\mathrm{Row}LA$, so we find an error at position 4. When computing the error value we find that
$$e_4=\frac{t^{2}}{t^{4} + at^{2} + a t + 1}.$$
Therefore, the correction gives the codeword
$$c=\left(0,\,1,\,a^2,\,\frac{ t^{2} +  t}{a^2t^{2} +  t + 1},\,\frac{t^{2}}{t^{4} + at^{2} + a t + 1}\right),$$
and the original message would be $M=\left (0,\,1,\,a^2 \right )$.

\section{Mathematical set up and proofs}\label{fundamentos}

The aim of this section is to prove the mathematical results that ground Algorithm \ref{algoritmo}. 
So, let $(\sigma,\delta)$ be a skew-derivation on a field $K$, as defined in Section \ref{Frontispicio}. Recall that, for each $\raiz \in K$, we define
\begin{equation}\label{varphi}
\varphi_u (a) = \sigma(a)\raiz + \delta(a),
\end{equation}
for all $a \in K$, thus obtaining a map $\varphi_u: K \to K$. This additive map becomes right $K^{\varphi_u}$--linear, where 
\[
K^{\varphi_u} =\{  b \in K : \varphi_u(ab) = \varphi_u(a)b \text{ for all } a \in K\}
\]
is the $\varphi_u$--invariant subfield of $K$. 

Let $\lend{}{K}$ denote the ring of endomorphisms of $K$  as an additive group.  Let $\mathcal{R}$ be the subring of $\lend{}{K}$ generated by $K$ and $\varphi_u$. Here, $K$ is seen as a subring of $\lend{}{K}$ by considering each element $a$ of $K$ as the additive endomorphism given by multiplication  by $a$.

\begin{proposition}\label{JB}
 If the dimension of $K$ as a $K^{\varphi_u}$--vector space is $m < \infty$, then the minimal polynomial of $\varphi_u$ as a $K^{\varphi_u}$--linear map has degree $m$. Consequently, $\varphi_u$ has at least a cyclic vector $\alpha \in K$. Moreover, 
\begin{equation}\label{Kplus}
\mathcal{R} = K \oplus K\varphi_u \oplus \cdots \oplus K\varphi_u^{m-1}.
\end{equation}
\end{proposition}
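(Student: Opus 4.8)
The plan is to identify $\mathcal{R}$ explicitly as a ring of $F$--linear endomorphisms of $K$, where I abbreviate $F = K^{\varphi_u}$, and to control its dimension through the Jacobson--Bourbaki correspondence. Writing $\ell_a \in \lend{}{K}$ for left multiplication by $a \in K$, the first step is to record the single commutation rule
\[
\varphi_u \circ \ell_a = \ell_{\sigma(a)} \circ \varphi_u + \ell_{\delta(a)}, \qquad (a \in K),
\]
which is immediate from $\varphi_u(ab) = \sigma(a)\varphi_u(b) + \delta(a)b$ and is exactly the defining relation of a skew polynomial ring. Pushing every left multiplication past each occurrence of $\varphi_u$ by means of this identity, one checks that the set $\sum_{k \geq 0} K\varphi_u^k$ is already closed under products and contains the generators, so $\mathcal{R} = \sum_{k \geq 0} K\varphi_u^k$.

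Next I would bound the size of $\mathcal{R}$ from above and below. Since $\varphi_u$ is $F$--linear and every $\ell_a$ is trivially $F$--linear, we have $\mathcal{R} \subseteq \operatorname{End}_F(K)$; as $[K:F] = m$, the latter has dimension $m$ as a left $K$--vector space, so $\dim_K \mathcal{R} \leq m < \infty$. For the reverse estimate I would compute the commutant $\mathcal{R}^c$ of $\mathcal{R}$ inside $\lend{}{K}$. An additive endomorphism commuting with all $\ell_a$ is left $K$--linear, hence equals $\ell_c$ for a unique $c \in K$ (here commutativity of $K$ is used); and such an $\ell_c$ commutes with $\varphi_u$ precisely when $\varphi_u(xc) = \varphi_u(x)c$ for all $x$, that is, when $c \in F$. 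Thus $\mathcal{R}^c = \{\ell_c : c \in F\}$, whose own commutant is exactly $\operatorname{End}_F(K)$. Since $\mathcal{R}$ is a subring of $\lend{}{K}$ containing $\ell_K$ and of finite dimension over $K$, the Jacobson--Bourbaki correspondence yields $\mathcal{R} = \mathcal{R}^{cc} = \operatorname{End}_F(K)$, and in particular $\dim_K \mathcal{R} = m$.

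The decomposition \eqref{Kplus} and the remaining claims then follow by elementary linear algebra. Combining $\mathcal{R} = \sum_{k \geq 0} K\varphi_u^k$ with $\dim_K \mathcal{R} = m$, I would take $n$ least with $\varphi_u^n \in \sum_{k < n} K\varphi_u^k$; minimality forces $1, \varphi_u, \dots, \varphi_u^{n-1}$ to be $K$--linearly independent while already spanning $\mathcal{R}$, so $n = m$ and \eqref{Kplus} holds. Being independent over $K$, these powers are a fortiori independent over $F$, so the minimal polynomial of $\varphi_u$ over $F$ has degree at least $m$; the Cayley--Hamilton bound gives the reverse inequality, so its degree is exactly $m$. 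Finally, an endomorphism whose minimal polynomial has degree equal to the dimension of the space is non-derogatory and hence admits a cyclic vector, which recovers the first assertions of Proposition \ref{phifinite} as well.

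I expect the main obstacle to be the correct application of Jacobson--Bourbaki: one must check its hypotheses ($\ell_K \subseteq \mathcal{R}$ and $\dim_K \mathcal{R} < \infty$) and keep the left/right bookkeeping straight, since $\varphi_u$ is only additive and not a field automorphism, so the naive Galois--correspondence picture does not apply directly. The delicate point is the commutant computation---showing that $\mathcal{R}^c$ is no larger than $\ell_F$---and this is precisely where the commutativity of $K$ turns the centralizer condition $\varphi_u(cx) = c\varphi_u(x)$ into the defining condition of $F = K^{\varphi_u}$.
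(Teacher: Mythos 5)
Your proof is correct and follows essentially the same route as the paper's: the commutation rule $\varphi_u a = \sigma(a)\varphi_u + \delta(a)$ shows $\mathcal{R} = \sum_{k} K\varphi_u^k$, Jacobson--Bourbaki pins down $\dim_K \mathcal{R} = m$, and elementary linear algebra then yields the direct sum \eqref{Kplus}, the degree of the minimal polynomial, and the existence of a cyclic vector. The only differences are bookkeeping: you make explicit the commutant computation $\mathcal{R}^c = \{\ell_c : c \in K^{\varphi_u}\}$ that the paper leaves implicit when invoking the correspondence, and you get finite-dimensionality of $\mathcal{R}$ from the inclusion $\mathcal{R} \subseteq \mathrm{End}_{K^{\varphi_u}}(K)$, whereas the paper gets it from the Cayley--Hamilton bound on the minimal polynomial before applying Jacobson--Bourbaki.
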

\begin{proof}
It easily follows from \eqref{torcida} that, in $\lend{}{K}$,
\begin{equation}\label{phicommutation}
\varphi_u a = \sigma(a) \varphi_u + \delta(a),
\end{equation}
for all $a \in K$. This implies that $\mathcal{R} = K + K \varphi_u + K \varphi_u^2 + \cdots $. 

Now, since $\dim_{K^{\varphi_\raiz}}K = m$, the minimal polynomial of $\varphi_u$ as a $K^{\varphi_\raiz}$--linear map has degree $n \leq m$. This in particular implies that $\mathcal{R} = K + K\varphi_u + \cdots + K\varphi_u^{n-1}$.  Oh the other hand, by Jacobson-Bourbaki's correspondence  \cite[Theorem 4.1]{Sweedler:1975}, $m = \dim_{K}\mathcal{R}$. We thus derive that $n = m$ and \eqref{Kplus}.
\end{proof}

In the rest of the paper, we assume that $\dim_{K^{\varphi_u}} K = m < \infty$. According to Proposition \ref{JB}, the minimal equation of $\varphi_u$ over $K^{\varphi_u}$ has degree $m$, that is, is of the form 
\begin{equation}\label{mucero}
 0 = \varphi_u^{m} + \mu_{m-1}\varphi_u^{m-1} + \cdots + \mu_1\varphi_u + \mu_0
 \end{equation}
with $\mu_i \in K^{\varphi_u}$ for $i= 0, \dots, m-1$.

Let $\alpha \in K$. For any subset $\{t_1, \dots, t_n \} \subseteq \{0,\dots,m-1\}$, define, following \cite{Delenclos/Leroy:2007}, the matrix
\[
W(\varphi_u^{t_1}(\alpha), \dots, \varphi_u^{t_n}(\alpha)) = 
\left( \begin{array}{ccccc}
\varphi_u^{t_1}(\alpha) &\varphi_u^{t_2}(\alpha)&& \cdots & \varphi_u^{t_n}(\alpha)\\
\varphi_u^{t_1+1}(\alpha) & \varphi_u^{t_2+1}(\alpha) && \cdots &  \varphi_u^{t_n+1}(\alpha)  \\
\vdots &\vdots && \ddots& \vdots\\
\varphi_u^{t_1+n-1}(\alpha)&  \varphi_u^{t_2+n-1}(\alpha)&& \cdots &  \varphi_u^{t_n+n-1}(\alpha)
\end{array} \right). 
\]

\begin{lemma}\label{Wronskian}
Given $\alpha \in K$, the following conditions are equivalent.
\begin{enumerate}
\item $\alpha$ is a cyclic vector for the $K^{\varphi_u}$--linear map $\varphi_u$.
\item $W(\alpha, \varphi_u(\alpha), \dots, \varphi_u^{m-1}(\alpha))$ is an invertible matrix. 
\item $W(\varphi_u^{t_1}(\alpha), \dots, \varphi_u^{t_n}(\alpha))$ is an invertible matrix for every subset $\{t_1, \dots, t_n \} \subseteq \{0,\dots,m-1\}$.
\end{enumerate}
\end{lemma}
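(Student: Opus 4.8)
The plan is to deduce all three equivalences from a single \emph{Wronskian criterion}: for $\beta_1,\dots,\beta_n\in K$ with $n\le m$, the matrix $W(\beta_1,\dots,\beta_n)$ (whose $(i,j)$ entry is $\varphi_u^{i-1}(\beta_j)$) is invertible if and only if $\beta_1,\dots,\beta_n$ are linearly independent over $K^{\varphi_u}$. Granting this, the lemma follows at once. Since $\dim_{K^{\varphi_u}}K=m$, the element $\alpha$ is cyclic precisely when $\alpha,\varphi_u(\alpha),\dots,\varphi_u^{m-1}(\alpha)$ are $K^{\varphi_u}$--independent, which by the criterion (with $\beta_j=\varphi_u^{j-1}(\alpha)$) is exactly the invertibility of the matrix in (2); this gives $(1)\Leftrightarrow(2)$. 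If $\alpha$ is cyclic, then $\{\varphi_u^t(\alpha)\}_{t=0}^{m-1}$ is a $K^{\varphi_u}$--basis, so any subset $\{\varphi_u^{t_1}(\alpha),\dots,\varphi_u^{t_n}(\alpha)\}$ is $K^{\varphi_u}$--independent and the criterion yields (3); thus $(1)\Rightarrow(3)$. Finally $(3)\Rightarrow(2)$ is the instance of (3) for the subset $\{0,1,\dots,m-1\}$. So everything reduces to the criterion.

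For the criterion, the easy half is that $K^{\varphi_u}$--dependence forces singularity: if $\sum_j c_j\beta_j=0$ with $c_j\in K^{\varphi_u}$ not all zero, then applying the $K^{\varphi_u}$--linear maps $\varphi_u^{i-1}$ shows that the same combination of the columns of $W$ vanishes. The substantive half is the converse, which I would reformulate over the field $K$ in terms of rows: $W(\beta_1,\dots,\beta_n)$ is singular if and only if there exist $\lambda_0,\dots,\lambda_{n-1}\in K$, not all zero, with $\sum_{i=0}^{n-1}\lambda_i\varphi_u^i(\beta_j)=0$ for every $j$; equivalently, the operator $P=\sum_{i=0}^{n-1}\lambda_i\varphi_u^i\in\mathcal{R}$ is nonzero (the $\varphi_u^i$ are left $K$--independent by \eqref{Kplus}), has degree at most $n-1$, and annihilates every $\beta_j$. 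Hence it suffices to prove that a nonzero $P\in\mathcal{R}$ of degree $d$ cannot vanish on $d+1$ elements that are $K^{\varphi_u}$--independent.

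The key estimate is therefore: \textbf{a nonzero $P=\sum_{i=0}^{d}\lambda_i\varphi_u^i\in\mathcal{R}$ with $\lambda_d\neq 0$ satisfies $\dim_{K^{\varphi_u}}\ker P\le d$.} I would prove this by induction on $d$ using right division in $\mathcal{R}$, which is available because $\mathcal{R}=K\oplus K\varphi_u\oplus\cdots\oplus K\varphi_u^{m-1}$ together with the commutation rule \eqref{phicommutation} makes $\mathcal{R}$ behave like a skew polynomial ring in the relevant degree range. The case $d=0$ is clear, as $P\in K^{\times}$ is injective. For $d\ge 1$, if $\ker P\neq 0$ I pick $0\neq w\in\ker P$, set $c=\varphi_u(w)w^{-1}$ so that $(\varphi_u-c)(w)=0$, and right--divide $P=Q\,(\varphi_u-c)+r$ with $\deg Q=d-1$ and $r\in K$; evaluating at $w$ forces $r=0$, so $P=Q\,(\varphi_u-c)$ and $\ker P=(\varphi_u-c)^{-1}(\ker Q)$. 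This gives $\dim_{K^{\varphi_u}}\ker P\le \dim_{K^{\varphi_u}}\ker(\varphi_u-c)+\dim_{K^{\varphi_u}}\ker Q$, and combining the base bound $\dim_{K^{\varphi_u}}\ker(\varphi_u-c)\le 1$ with $\dim_{K^{\varphi_u}}\ker Q\le d-1$ closes the induction.

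The step I expect to be the main obstacle is the degree--one bound $\dim_{K^{\varphi_u}}\ker(\varphi_u-c)\le 1$. Here I would argue that if $z,w$ are nonzero with $\varphi_u(z)=cz$ and $\varphi_u(w)=cw$, then writing $z=bw$ and using $\varphi_u(bw)=\sigma(b)\varphi_u(w)+\delta(b)w$ yields $\sigma(b)c+\delta(b)=cb$, that is $b\in K^{\varphi_c}$ in the sense of Lemma \ref{invconj}. For $m\ge 2$ both $K^{\varphi_u}$ and $K^{\varphi_c}$ are proper subfields of $K$, so Proposition \ref{genejemplos} identifies each as $K^\delta$ or $K^\sigma$ according only to whether $\sigma=\mathrm{id}_K$; hence $K^{\varphi_c}=K^{\varphi_u}$, giving $b\in K^{\varphi_u}$ and the $K^{\varphi_u}$--dependence of $z$ and $w$. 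A little care disposes of the degenerate cases $m=1$ and $c=u$, which are immediate. Alternatively, the entire Wronskian criterion can be quoted from the study of non--commutative Wronskians in \cite{Delenclos/Leroy:2007}.
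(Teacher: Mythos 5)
Your strategy---reduce all three equivalences to the criterion that $W(\beta_1,\dots,\beta_n)$ is invertible if and only if $\beta_1,\dots,\beta_n$ are $K^{\varphi_u}$--independent, and prove the criterion by bounding $\dim_{K^{\varphi_u}}\ker P$ by $\deg P$ via right division---is correct in outline and genuinely different from the paper's proof, which identifies $K^{\varphi_u}$ with the $(\sigma,\delta)$--centralizer of $u$ (Lemma \ref{invconj}) and then quotes \cite[Theorem 5.3]{Delenclos/Leroy:2007} wholesale; that citation is precisely your fallback option. Your reduction of conditions (1)--(3) to the criterion is fine; the passage from singularity of $W$ to a nonzero annihilating operator of degree at most $n-1$ is legitimate because $n\le m$ and \eqref{Kplus} gives the left $K$--independence of $1,\varphi_u,\dots,\varphi_u^{m-1}$; and the factorization $P=Q(\varphi_u-c)$ together with the count $\dim\ker P\le\dim\ker(\varphi_u-c)+\dim\ker Q$ over $K^{\varphi_u}$ is valid (the division can be carried out directly on operators using \eqref{phicommutation}, since all degrees stay below $m$).

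The gap sits exactly at the step you singled out as the main obstacle, and it is a real one: you assert that $m\ge 2$ forces $K^{\varphi_c}$ to be a proper subfield of $K$. For arbitrary $c$ this is false. Take $\sigma\neq\mathrm{id}_K$ of finite order $m\ge 2$ and $\delta=\delta_{\sigma,-c}$, i.e.\ $\delta(a)=c(a-\sigma(a))$; then Proposition \ref{genejemplos}\eqref{KKfi}, applied with $c$ in place of $u$, gives $K^{\varphi_c}=K$, while for any $u\neq c$ Corollary \ref{corejemplos} gives $\dim_{K^{\varphi_u}}K=m\ge 2$. So properness of $K^{\varphi_c}$ cannot follow from $m\ge 2$ alone; it must use the datum that $c$ actually has a nonzero eigenvector $\varphi_u(w)=cw$---and that requires an argument you did not supply (nor is it covered by your degenerate cases $m=1$ and $c=u$). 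The repair is short: if $K^{\varphi_c}=K$, then $\delta(a)=c(a-\sigma(a))$ for all $a$ by Proposition \ref{genejemplos}\eqref{KKfi}, hence $\varphi_c(a)=\sigma(a)c+\delta(a)=ca$ and therefore $\varphi_u(a)=ca+\sigma(a)(u-c)$; evaluating at the eigenvector $w$ and using $\varphi_u(w)=cw$ yields $\sigma(w)(u-c)=0$, so $u=c$, which makes $\varphi_u$ multiplication by $c$ and $K^{\varphi_u}=K$, contradicting $m\ge 2$. With this inserted, your degree-one bound $\dim_{K^{\varphi_u}}\ker(\varphi_u-c)\le 1$ holds, and the whole self-contained proof goes through.
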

\begin{proof}
For every nonzero $c \in K$, consider the conjugate of $\raiz$ by $c$:
 \[ {}^c\raiz = \sigma(c)\raiz c^{-1} + \delta(c) c^{-1}.\] 
 By Lemma \ref{invconj},
 \[
 K^{\varphi_u} =  \{ c \in K\setminus \{ 0 \} \; | \; {}^c\raiz = \raiz \} \cup \{ 0 \};
\]
the latter being the $(\sigma-\delta)$--centralizer of $\raiz$ in the terminology of \cite{Delenclos/Leroy:2007}. Since $\alpha$ is a cyclic vector for $\varphi_u$ precisely when
$
\{\alpha, \varphi_u(\alpha), \dots, \varphi_u^{m-1}(\alpha) \}$ is a $K^{\varphi_u}$--basis of $K$, we may apply  \cite[Theorem 5.3]{Delenclos/Leroy:2007} to deduce that the three  conditions  are equivalent. 
 \end{proof}
 
\noindent
 \textbf{Proof of Proposition \ref{phifinite}.}
 It is a  consequence of Proposition \ref{JB} and Lemma \ref{Wronskian}.
 
 \medskip

Fix a cyclic vector $\alpha \in K$ of $\varphi_u$.  Let $A = W(\alpha, \varphi_u(\alpha), \dots, \varphi_u^{m-1}(\alpha))$ which, by Lemma \ref{Wronskian}, is an invertible matrix with coefficients in $K$. 

\begin{theorem}\label{MDS}
For $2 \leq d \leq m$, let $C_{(\varphi_u, \alpha, d)} \subseteq K^m$ be the left kernel of the matrix 
\begin{equation}\label{Hache}
H = \left( \begin{array}{ccccc}
\alpha & \varphi_u(\alpha)& & \cdots &\varphi_u^{d-2}(\alpha)  \\
\varphi_u(\alpha)& \varphi_u^{2}(\alpha)&& \cdots & \varphi_u^{d-1}(\alpha)  \\
\vdots &\vdots && \ddots& \vdots\\
\varphi_u^{m-1}(\alpha)& \varphi_u^{m}(\alpha) && \cdots & \varphi_u^{m+d-3}(\alpha)
\end{array} \right).
\end{equation}
Then $C_{(\varphi_u,\alpha,d)}$ is a $K$--linear code of dimension $m-d+1$ and minimum Hamming distance $d$.
\end{theorem}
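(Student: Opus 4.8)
The plan is to split the statement into its two assertions---the dimension and the minimum distance---and to reduce both to the invertibility of the non-commutative Wronskian matrices supplied by Lemma \ref{Wronskian}.

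First I would settle the dimension. Observe that $H$ is nothing but the matrix formed by the first $d-1$ columns of $A = W(\alpha, \varphi_u(\alpha), \dots, \varphi_u^{m-1}(\alpha))$, since the $(i,j)$ entry of both matrices is $\varphi_u^{i+j}(\alpha)$. By Lemma \ref{Wronskian}, $A$ is invertible, so its columns are $K$--linearly independent; in particular the $d-1$ columns of $H$ are linearly independent, whence $\mathrm{rank}\, H = d-1$. As $C_{(\varphi_u,\alpha,d)}$ is the left kernel of $H$, its dimension is $m - \mathrm{rank}\, H = m-d+1$.

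Next I would bound the minimum distance from below. The code is the left kernel of $H$, so $H^{T}$ is a parity-check matrix for it, and its minimum Hamming distance equals the least number of rows of $H$ that are $K$--linearly dependent. Thus it suffices to show that any $d-1$ rows of $H$ are linearly independent. Picking rows indexed by $0 \leq i_1 < \dots < i_{d-1} \leq m-1$, the resulting $(d-1)\times(d-1)$ submatrix has $(a,b)$ entry $\varphi_u^{i_a+b}(\alpha)$, hence it is the transpose of $W(\varphi_u^{i_1}(\alpha), \dots, \varphi_u^{i_{d-1}}(\alpha))$. Since $\alpha$ is a cyclic vector, condition (3) of Lemma \ref{Wronskian} guarantees that this Wronskian matrix is invertible, so the chosen rows are independent. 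Therefore no nonzero codeword can have weight smaller than $d$, that is, the minimum distance is at least $d$.

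To finish, I would invoke the Singleton bound: a code of length $m$ and dimension $m-d+1$ has minimum distance at most $m-(m-d+1)+1 = d$. Combined with the previous paragraph this forces the minimum distance to equal $d$, so $C_{(\varphi_u,\alpha,d)}$ is MDS. The substantive step is the identification of every maximal square submatrix of $H$ with (the transpose of) a non-commutative Wronskian matrix together with the appeal to Lemma \ref{Wronskian}(3); the cyclic-vector hypothesis on $\alpha$ is exactly what makes all these matrices invertible, and this is where the argument hinges. Everything else is a direct reading of the ranks involved and an application of the Singleton bound.
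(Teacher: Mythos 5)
Your proof is correct and follows essentially the same route as the paper's: the dimension count comes from $H$ being the first $d-1$ columns of the invertible matrix $A$, and the distance bound comes from identifying every $(d-1)\times(d-1)$ submatrix of $H$ with the transpose of a Wronskian matrix $W(\varphi_u^{i_1}(\alpha),\dots,\varphi_u^{i_{d-1}}(\alpha))$, invertible by Lemma \ref{Wronskian}. The only cosmetic difference is that you state the Singleton bound explicitly to pin the distance at exactly $d$, a step the paper leaves implicit.
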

\begin{proof}
Since $H$ consists of the first $d-1$ columns of the invertible matrix $A$, we get that the dimension of the left $K$--vector subspace $C_{(\varphi_u,\alpha,d)}$ is $m-d+1$.  Every submatrix  $M$  of order $d-1$  of $H$ is of the form
$$M = \left( \begin{array}{ccccc}
\varphi_u^{k_1}(\alpha) & \varphi_u^{k_1+1}(\alpha)& & \cdots &\varphi_u^{k_1 + d-2}(\alpha)  \\
\varphi_u^{k_2}(\alpha)& \varphi_u^{k_2+1}(\alpha)&& \cdots & \varphi_u^{k_2+ d-2}(\alpha)  \\

\vdots &\vdots && \ddots& \vdots\\
\varphi_u^{k_{d-1}}(\alpha)& \varphi_u^{k_{d-1}+1}(\alpha) && \cdots & \varphi_u^{k_{d-1} +d-2}(\alpha)
\end{array} \right), $$
 where $\{k_1,\dots, k_{d-1}\}\subseteq\{0,\dots,m-1\}$. We see that
 \[
 M = W(\varphi_u^{k_1}{(\alpha)},\dots,\varphi_u^{k_{d-1}}{(\alpha}))^{t},
  \]
 which is, by Lemma \ref{Wronskian}, invertible. Hence, the Hamming distance of $C_{(\varphi_u,\alpha,d)}$ is $d$. \end{proof}

The proof of Lemma \ref{Wronskian} is based on a result from \cite{Delenclos/Leroy:2007} in the realm of the theory of skew polynomials. Indeed, for some purposes, it is useful to understand the ring $\mathcal{R}$ as a factor ring of a ring of skew polynomials. Let us derive such a description.

The skew derivation $(\sigma,\delta)$ leads to the construction of a non commutative polynomial ring $R = K[x;\sigma,\delta]$, often called a skew polynomial ring (see, e.g., \cite{Jacobson:1996}). The elements of $R$ are polynomials in an indeterminate $x$ with coefficients from $K$ written on the left (that is, the monomials $1, x, x^2, \dots $ form a basis of $R$ as a left vector space over $K$). The multiplication of $R$ is subject to the following rule:
\begin{equation}\label{commutacion}
xa = \sigma(a)x + \delta(a),
\end{equation}
for all $a \in K$. 

\begin{proposition}\label{isopoly}
The map $\pi : R  \to \mathcal{R}$ that sends $\sum_if_ix^i$ onto $\sum_if_i\varphi_u^i$ is a surjective ring homomorphism whose kernel is $R\mu = \mu R$,
where   $$\mu = x^{m} + \sum_{i= 0}^{m-1}\mu_ix^i$$
is a polynomial in $R$ built from the coefficients of the minimal equation of $\varphi_u$, see \eqref{mucero}.

Hence, there is a $K$--linear isomorphism of rings $R/R\mu \cong \mathcal{R}$.
\end{proposition}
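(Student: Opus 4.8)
The plan is to realize $\pi$ through the universal property of the skew polynomial ring and then to pin down its kernel by Euclidean division carried out on both sides.

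That $\pi$ is a ring homomorphism follows from the fact that the two ambient commutation rules coincide. In $R$ the product obeys \eqref{commutacion}, namely $xa = \sigma(a)x + \delta(a)$, while inside $\mathcal{R} \subseteq \lend{}{K}$ the operator $\varphi_u$ satisfies the identical relation \eqref{phicommutation}, $\varphi_u a = \sigma(a)\varphi_u + \delta(a)$. Since $R = K[x;\sigma,\delta]$ is the universal ring generated by a copy of $K$ together with one element $x$ subject to \eqref{commutacion}, the multiplication embedding $K \hookrightarrow \mathcal{R}$ together with the assignment $x \mapsto \varphi_u$ extends uniquely to a ring homomorphism, which read off on the monomial basis is precisely $\pi(\sum_i f_i x^i) = \sum_i f_i\varphi_u^i$. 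One checks at once that $\pi$ is left $K$--linear, since $\pi(af) = a\pi(f)$ for $a \in K$. Surjectivity is then immediate: $\im \pi$ is a subring of $\lend{}{K}$ containing both $K = \pi(K)$ and $\varphi_u = \pi(x)$, hence contains the subring $\mathcal{R}$ they generate; equivalently, \eqref{Kplus} writes every element of $\mathcal{R}$ as $\sum_{i=0}^{m-1} f_i\varphi_u^i = \pi(\sum_i f_i x^i)$.

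Next I would identify the kernel. The polynomial $\mu = x^m + \sum_{i=0}^{m-1}\mu_ix^i$ lies in $\ker\pi$ because $\pi(\mu) = \varphi_u^m + \sum_i\mu_i\varphi_u^i = 0$ is exactly the minimal equation \eqref{mucero}; hence $R\mu \subseteq \ker\pi$. For the reverse inclusion I use that $\mu$ is monic, so its leading coefficient is a unit and division with remainder on the right applies: any $f \in \ker\pi$ can be written $f = q\mu + s$ with $\deg s < m$. Applying $\pi$ and using $\pi(\mu)=0$ gives $\pi(s)=0$; but $s = \sum_{i=0}^{m-1}s_ix^i$ has $\pi(s) = \sum_{i=0}^{m-1}s_i\varphi_u^i$, and the directness of the decomposition \eqref{Kplus} forces $s_0 = \cdots = s_{m-1} = 0$. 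Thus $f = q\mu \in R\mu$, so $\ker\pi = R\mu$. Running the same argument on the other side gives $\ker\pi = \mu R$: because $\sigma$ is a field automorphism, $R$ admits division with remainder on the left as well, so each $f \in \ker\pi$ may be written $f = \mu q' + r'$ with $\deg r' < m$, and applying $\pi$ and invoking \eqref{Kplus} once more kills $r'$. Therefore $R\mu = \ker\pi = \mu R$, as claimed, and the first isomorphism theorem furnishes a ring isomorphism $R/R\mu = R/\ker\pi \cong \im\pi = \mathcal{R}$, which is $K$--linear since $\pi$ is.

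I expect the only genuinely delicate point to be the two-sided generation $R\mu = \mu R$: this is not automatic from $\mu$ being monic and is obtained here by exploiting that $\ker\pi$ is a two-sided ideal, which is why the division argument is run both on the left and on the right. The availability of left division is exactly the place where the hypothesis that $\sigma$ is an automorphism, making $R$ left as well as right Euclidean, is used.
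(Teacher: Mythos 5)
Your proof is correct, and it is built from the same raw materials as the paper's proof (the commutation identity \eqref{phicommutation} and Proposition \ref{JB}), but the way you pin down the kernel differs in detail and is in one respect more careful. The paper argues structurally: $\ker\pi$ is a two-sided ideal; as a left ideal it is generated by the monic polynomial $h$ of least degree that it contains, and $\deg h = \dim_K (R/\ker\pi) = \dim_K \mathcal{R} = m$ by Proposition \ref{JB}, whence $h = \mu$. You argue instead by remainder annihilation: divide an arbitrary $f \in \ker\pi$ by the monic polynomial $\mu$, apply $\pi$, and use the directness of the decomposition \eqref{Kplus} to force the remainder to vanish. The two mechanisms are equivalent in substance --- both are Euclidean division plus Proposition \ref{JB} --- but yours is slightly more self-contained, since it bypasses the identification of the degree of a generator with the dimension of the quotient module. (Your use of the universal property of $K[x;\sigma,\delta]$ for multiplicativity, versus the paper's direct check via \eqref{phicommutation}, is a cosmetic difference.) The clearest divergence is at the equality $R\mu = \mu R$: the paper dispatches it with the single clause ``since $I$ is an ideal, we get that $I = \mu R$ as well,'' leaving the reader to reconstruct the argument, whereas you run the division explicitly on the other side of $\mu$ and correctly isolate the precise point where the hypothesis that $\sigma$ is an automorphism --- making $R$ Euclidean on both sides --- is actually needed. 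That last observation is a genuine improvement in explicitness over the printed proof.
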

\begin{proof}
Observe that $\pi$ is clearly left $K$--linear and, from Proposition \ref{JB}, surjective. It is multiplicative since $\varphi_u$ satisfies \eqref{phicommutation}. Its kernel is an ideal $I$ of $R$ which, as a left ideal, is generated by the monic polynomial $h \in R$ in $I$ of least degree, due to the left Euclidean division algorithm enjoyed by $R$ (see, e.g. \cite{Jacobson:1996}). Also, the degree of $h$ is the dimension of $R/I \cong \mathcal{R}$ as a left $K$--vector space. By Proposition \ref{JB}, this dimension equals $m$. We see that $\mu$ fits these requirements, so that $h = \mu$, and $I = R\mu$. Finally, since $I$ is an ideal, we get that $I = \mu R$ as well.  
\end{proof}

 We  may thus identify $\mathcal{R}$ with $R/R\mu$, and, therefore, its elements with polynomials in $R$ with degree smaller than $m$ (this identification makes correspond $\varphi_u$ with $x$). This view makes more natural some concepts, like the degree of an element of $\mathcal{R}$. 
 
 The  coordinate isomorphism of left $K$--vector spaces 
 \[
 \mathfrak{v} : \mathcal{R} \to K^m, \qquad (\sum_{i=0}^{m-1}f_ix^i \mapsto (f_0,f_1,\dots,f_{m-1}))
 \]
allows the transfer of elements and vector subspaces between both $K$--vector spaces. 

We are ready to consider our decoding algorithm. Let $c\in C_{(\varphi_u,\alpha,d)}$ be a codeword that is transmitted through a noisy channel, and let 
\[
y= (y_0,y_1, \dots, y_{m-1}) \in K^m
\]
be the received word. We may decompose $y = c + e$, where
\[
e=(e_0,e_1,\dots,e_{m-1}) \in K^m
\]
 is the error vector. By $k_1, \dots, k_v \in \{0, 1, \dots, m-1\}$ we denote the positions where the nonzero error values $e_{k_1}, \dots, e_{k_v} \in K$ occur. We prove first that the latter can be computed from $y$ once the positions are known.

 \begin{proposition}\label{errorsol}
 If $0 \leq i \leq d-2$, then
 \begin{equation}\label{errorsol1}
 \sum_{j= 0}^{m-1}y_j\varphi_u^{i+j}(\alpha) = \sum_{j=1}^v e_{k_j}\varphi_u^{i + k_j}(\alpha).
 \end{equation}
 Therefore, if $v \leq d-1$, then $(e_{k_1}, \dots, e_{k_v})$ is the unique solution of the linear system of equations
 \begin{equation}\label{errorsol2}
 \sum_{j=0}^{m-1} y_j\varphi_u^{i+j}(\alpha) = \sum_{j=1}^v e_{k_j}\varphi_u^{i + k_j}(\alpha), \qquad (0 \leq i \leq v-1).
 \end{equation}
 \end{proposition}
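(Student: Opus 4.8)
The plan is to split the proof into the two asserted parts, both of which follow quickly from the defining kernel condition and from Lemma~\ref{Wronskian}.

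For the first identity, I would start from the fact that $c$ lies in the left kernel of $H$. Reading off the $i$-th column of $H$ (for $0 \le i \le d-2$), the relation $cH = 0$ says precisely that $\sum_{j=0}^{m-1} c_j \varphi_u^{i+j}(\alpha) = 0$. Writing $y = c + e$ and using additivity of the left-hand side in its coefficients, I would obtain $\sum_{j=0}^{m-1} y_j \varphi_u^{i+j}(\alpha) = \sum_{j=0}^{m-1} e_j \varphi_u^{i+j}(\alpha)$. Since the only nonzero entries of $e$ sit at the positions $k_1, \dots, k_v$, the right-hand sum collapses to $\sum_{j=1}^v e_{k_j} \varphi_u^{i+k_j}(\alpha)$, which is \eqref{errorsol1}. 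This part is purely formal; no structural input is needed beyond the membership $c \in C_{(\varphi_u,\alpha,d)}$.

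For the second part, I would first note that the hypothesis $v \le d-1$ guarantees $v-1 \le d-2$, so the identity \eqref{errorsol1} just proved holds for every $i = 0, \dots, v-1$; hence $(e_{k_1}, \dots, e_{k_v})$ is indeed \emph{a} solution of the system \eqref{errorsol2}. The crux is then uniqueness, i.e.\ that the system has a nonsingular coefficient matrix. Here I would observe that the coefficient of $e_{k_l}$ in the $i$-th equation is $\varphi_u^{i+k_l}(\alpha)$, so that the $v \times v$ coefficient matrix of \eqref{errorsol2} is exactly $W(\varphi_u^{k_1}(\alpha), \dots, \varphi_u^{k_v}(\alpha))$ in the notation preceding Lemma~\ref{Wronskian}. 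Since $\{k_1, \dots, k_v\} \subseteq \{0, \dots, m-1\}$ and $\alpha$ is a cyclic vector for $\varphi_u$, condition~(3) of Lemma~\ref{Wronskian} guarantees that this matrix is invertible, and uniqueness follows at once.

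The only point that needs care — rather than being a genuine obstacle — is matching the system's coefficient matrix to the exact shape of $W$: one must check the index bookkeeping so that the row index $i$ and the shifts $k_l$ line up with the definition of the Wronskian matrix, and in particular confirm that the matrix appears directly (with no transpose), in contrast with the submatrices $W(\dots)^t$ that arose in the proof of Theorem~\ref{MDS}. Everything else is immediate from Lemma~\ref{Wronskian}.
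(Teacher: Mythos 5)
Your proof is correct and follows essentially the same route as the paper's: the first identity is read off from $c$ lying in the left kernel of $H$, and uniqueness in the second part comes from the invertibility of the Wronskian-type coefficient matrix, guaranteed by Lemma~\ref{Wronskian} since $\{k_1,\dots,k_v\}\subseteq\{0,\dots,m-1\}$ and $\alpha$ is cyclic. The only (immaterial) discrepancy is your transpose remark: the paper writes the coefficient matrix as $W(\varphi_u^{k_1}(\alpha),\dots,\varphi_u^{k_v}(\alpha))^{t}$ because it keeps the unknowns as a row vector multiplying the matrix on the left, consistent with its left-kernel conventions, whereas you use the column-vector convention in which the matrix is $W$ itself; since $K$ is commutative, a matrix is invertible exactly when its transpose is, so both formulations yield the same conclusion.
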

 \begin{proof}
 The equations \eqref{errorsol1} hold because $C_{(\varphi_u,\alpha,d)}$ is the left kernel of the matrix $H$ defined in \eqref{Hache}.  The linear system  \eqref{errorsol2} has a unique solution since the matrix
 $$\left( \begin{array}{ccccc}
\varphi_u^{k_1}(\alpha) &\varphi_u^{k_1+1}(\alpha)  && \cdots &\varphi_u^{k_1+v-1}(\alpha)  \\

\varphi_u^{k_2}(\alpha) & \varphi_u^{k_2+1}(\alpha) && \cdots &  \varphi_u^{k_2+v-1}(\alpha) \\

\vdots &\vdots && \ddots& \vdots\\

\varphi_u^{k_v}(\alpha) & \varphi_u^{k_v+1}(\alpha) && \cdots &  \varphi_u^{k_v+v-1}(\alpha)
\end{array} \right) = W(\varphi_u^{k_1}(\alpha), \dots, \varphi_u^{k_v}(\alpha))^{t}$$  
 is invertible by Lemma \ref{Wronskian}. 

 \end{proof}

Our  aim is then to design an algorithm for computing the positions $k_1, \dots, k_v$ where the errors $e_{k_1}, \dots, e_{k_v}$ appear. We assume in our exposition that $e \neq 0$.  

\medskip

For every pair $(i,k)$ of non-negative integers,  set  
\begin{equation}\label{Sik}
S_{i,k}=\sum_{j=1}^v\varphi_u^{i+k_j}(\alpha)\psi^k(e_{k_j}),
\end{equation}
where 
\begin{equation}\label{psi}
\psi(a) = \sigma^{-1}(\delta(a) - \raiz a)
\end{equation}
 for all $a \in K$. 

\begin{lemma}\label{calcular}
For all pairs $(i,k)$ of non-negative integers, we have
\begin{equation}\label{Sik+1}
\sigma(S_{i,k+1}) = \delta(S_{i,k})-S_{i+1,k}
\end{equation}
Moreover, 

\begin{equation}\label{S1}
S_{i,0} =  \sum_{j= 0}^{m-1}y_j\varphi_u^{i+j}(\alpha), 
\end{equation}
for every $i=0, \dots, d-2$, and  the values $S_{i,k}$ can be computed recursively by means of  \eqref{Sik+1} from the received word $y$ whenever $i + k \leq d-2$.
\end{lemma}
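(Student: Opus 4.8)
The plan is to establish the three assertions in the order stated, treating the recursion \eqref{Sik+1} as the computational core and deriving the remaining two claims from it.

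For \eqref{Sik+1}, I would expand both sides directly from the definition \eqref{Sik}. Writing $a_j = \varphi_u^{i+k_j}(\alpha)$ and $b_j = \psi^k(e_{k_j})$, so that $S_{i,k} = \sum_{j=1}^v a_j b_j$, the skew-derivation rule \eqref{torcida} gives $\delta(S_{i,k}) = \sum_j \left( \sigma(a_j)\delta(b_j) + \delta(a_j)b_j \right)$. On the other hand $S_{i+1,k} = \sum_j \varphi_u(a_j)b_j$, and substituting the definition \eqref{varphi} of $\varphi_u$ yields $S_{i+1,k} = \sum_j \left( \sigma(a_j)\raiz + \delta(a_j) \right)b_j$. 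Subtracting, the terms $\delta(a_j)b_j$ cancel, leaving $\delta(S_{i,k}) - S_{i+1,k} = \sum_j \sigma(a_j)\left( \delta(b_j) - \raiz b_j \right)$. For the left-hand side I would compute $\sigma(S_{i,k+1})$ using that $\sigma$ is a field automorphism: since $S_{i,k+1} = \sum_j a_j\, \psi(b_j)$, applying $\sigma$ distributes over the sum and over each product, and the definition \eqref{psi} of $\psi$ gives $\sigma(\psi(b_j)) = \delta(b_j) - \raiz b_j$. Both sides therefore equal $\sum_j \sigma(a_j)\left( \delta(b_j) - \raiz b_j \right)$, which proves \eqref{Sik+1}.

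The identity \eqref{S1} is then immediate: at $k=0$ the definition \eqref{Sik} reads $S_{i,0} = \sum_{j=1}^v \varphi_u^{i+k_j}(\alpha)e_{k_j}$, which coincides with the right-hand side of \eqref{S1} by equation \eqref{errorsol1} of Proposition \ref{errorsol} (valid for $0 \leq i \leq d-2$), the two expressions differing only by the commutativity of $K$.

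Finally, for the recursive computability I would rewrite \eqref{Sik+1}, using that $\sigma$ is invertible, as $S_{i,k+1} = \sigma^{-1}\left( \delta(S_{i,k}) - S_{i+1,k} \right)$, which is exactly the rule \eqref{sindromes1}. The claim then follows by induction on $k$: the base case $k=0$ is covered by \eqref{S1}, which furnishes $S_{i,0}$ from $y$ for all $0 \leq i \leq d-2$; and in the inductive step, computing $S_{i,k+1}$ with $i + (k+1) \leq d-2$ requires only $S_{i,k}$ and $S_{i+1,k}$, both of which have index sum at most $d-2$ and are thus available by the induction hypothesis. The only point demanding care is this index bookkeeping, ensuring the shifted term $S_{i+1,k}$ never falls outside the computable range; the entire algebraic content of the lemma is contained in the one-line cancellation establishing \eqref{Sik+1}, which is straightforward once $\psi$ is recognized as designed precisely to produce it.
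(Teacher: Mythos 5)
Your proof is correct and follows essentially the same route as the paper: the paper isolates the single-term identity $\sigma(a\psi(b)) = \delta(ab) - \varphi_u(a)b$ (proved exactly by your cancellation via \eqref{psi}, \eqref{torcida} and \eqref{varphi}) and then sums it over $j$, whereas you expand both sides of the summed identity directly, which is the same computation differently packaged. Your derivation of \eqref{S1} from \eqref{errorsol1} via commutativity of $K$, and the induction with index bookkeeping for the recursive computation, likewise match the paper's argument (the paper merely states the latter without spelling out the bookkeeping).
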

\begin{proof}
Observe that
\begin{equation}\label{psiphi}
\sigma(a\psi(b)) = \delta(ab) - \varphi_u(a)b,
\end{equation}
for all $a,b \in K$.
Indeed, 
\[
\begin{array}{rcl}
\sigma(a\psi(b)) & \stackrel{\eqref{psi}}{=} & \sigma(a)(\delta(b)-\raiz b) \\
& \stackrel{\eqref{torcida}}{=} & \delta(ab) - \delta(a)b - \sigma(a)\raiz b \\
& \stackrel{\eqref{varphi}}{=} & \delta(ab) - \varphi_u(a)b.
\end{array}
\]
 
For every pair $(i,k)$, 

\[
\begin{array}{rcl}
\sigma(S_{i,k+1})&\stackrel{\eqref{Sik}}{=}& \sum_{j=1}^v \sigma(\varphi_u^{i+k_j}(\alpha) \psi^{k+1}(e_{k_j}) )\\
& \stackrel{\eqref{psiphi}}{=} & \sum_{j=1}^v \delta(\varphi_u^{i+k_j}(\alpha)\psi^k(e_{k_j})) - \sum_{j=1}^v \varphi_u^{i+ k_j + 1}(\alpha) \psi^k(e_{k_j}) \\
& \stackrel{\eqref{Sik}}{=} & \delta(S_{i,k}) - S_{i+1,k}.
\end{array}
\] 

Finally, since $K$ is commutative,  \eqref{S1} follows from \eqref{errorsol1}. 
\end{proof}

Set $T = \{k_1, \dots, k_v\}$, and let $A_T$ be the submatrix of $A = W(\alpha, \varphi_u(\alpha), \dots, \varphi_u^{m-1}(\alpha))$ formed by the columns at positions $k_1, \dots, k_v$, that is
\[
A_T = \left( \begin{array}{ccccc}
\varphi_u^{k_1}(\alpha) &\varphi_u^{k_2}(\alpha)  && \cdots &\varphi_u^{k_v}(\alpha)  \\

\varphi_u^{k_1+1}(\alpha) & \varphi_u^{k_2+1}(\alpha) && \cdots &  \varphi_u^{k_v+1}(\alpha) \\

\vdots &\vdots && \ddots& \vdots\\

\varphi_u^{k_1+m-1}(\alpha) & \varphi_u^{k_2+m-1}(\alpha) && \cdots &  \varphi_u^{k_v+m-1}(\alpha)
\end{array} \right).
\]

\begin{proposition}\label{Prop 3} Define, for every $1 \leq r$, the matrix
\[ E_r=\left( \begin{array}{ccccc}
e_{k_1} &\psi(e_{k_1})  && \cdots &\psi^{r-1}(e_{k_1})  \\
e_{k_2} & \psi(e_{k_2}) && \cdots &  \psi^{r-1}(e_{k_2}) \\
\vdots &\vdots && \ddots& \vdots\\
e_{k_v} & \psi(e_{k_v} ) && \cdots &  \psi^{r-1}(e_{k_v} )
\end{array} \right).
\]

and set 
\[
\theta = \max \{ r : rank \, E_r = r  \}.
\]
\begin{enumerate}
\item\label{ro} If $V \subseteq K^m$ is the left kernel of the matrix $A_T E_{\theta}$, then  $\mathfrak{v}^{-1}(V)=\mathcal{R}\rho$ for some  $\rho\in \mathcal{R}$ of degree $\theta$. 
\item\label{rocalculo} If $B$ is the matrix formed by the first $\theta + 1$ rows of $A_TE_{\theta}$, then we may choose $\rho = \rho_0 + \rho_1x + \cdots + \rho_{\theta}x^\theta$, for any nonzero vector $(\rho_0,\rho_1, \dots, \rho_\theta)$ in the left kernel of $B$. 
\end{enumerate}
\end{proposition}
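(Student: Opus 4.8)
The plan is to pass to the ring $\mathcal{R}$ through the isomorphism $\mathcal{R}\cong R/R\mu$ of Proposition~\ref{isopoly}, to regard each $f=\sum_{i=0}^{m-1}f_ix^i$ as the additive endomorphism $\sum_i f_i\varphi_u^i$ of $K$, and to prove that $\mathfrak{v}^{-1}(V)$ is a \emph{left ideal} of $\mathcal{R}$; part~\eqref{ro} will then follow from the structure theory of $\mathcal{R}$. The starting point is the identity $A_TE_\theta=(S_{i,k})_{0\le i\le m-1,\,0\le k\le\theta-1}$, which one obtains by expanding the matrix product and comparing with \eqref{Sik}. Writing $\beta_j=\varphi_u^{k_j}(\alpha)$ and $f(\beta_j)=\sum_i f_i\varphi_u^i(\beta_j)$ for the value of the endomorphism $f$ at $\beta_j$, a rearrangement of $\sum_i f_iS_{i,k}$ shows that $f\in\mathfrak{v}^{-1}(V)$ if and only if
\[
\sum_{j=1}^v f(\beta_j)\,\psi^k(e_{k_j})=0,\qquad k=0,\dots,\theta-1 .
\]
This description depends only on the endomorphism $f$, not on its chosen representative, so it is well defined on $\mathcal{R}$.

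Since $V$ is a left $K$--subspace and $\mathfrak{v}$ is left $K$--linear, $\mathfrak{v}^{-1}(V)$ is stable under left multiplication by $K$; as $\mathcal{R}$ is generated by $K$ and $\varphi_u=x$, I only need closure under left multiplication by $x$. For $f\in\mathfrak{v}^{-1}(V)$ set $w_j=f(\beta_j)$, so that $(xf)(\beta_j)=\varphi_u(w_j)$. Using \eqref{psiphi} with $a=w_j$ and $b=\psi^k(e_{k_j})$ and summing over $j$ yields
\[
\sum_{j=1}^v\varphi_u(w_j)\psi^k(e_{k_j})=\delta\Bigl(\sum_{j=1}^v w_j\psi^k(e_{k_j})\Bigr)-\sigma\Bigl(\sum_{j=1}^v w_j\psi^{k+1}(e_{k_j})\Bigr).
\]
For $0\le k\le\theta-2$ both inner sums vanish by hypothesis, so the left-hand side is $0$. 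The remaining case $k=\theta-1$ is the crux of the whole argument: it produces the term $\sum_j w_j\psi^\theta(e_{k_j})$, which is \emph{not} covered by the hypotheses. Here I would invoke the very definition of $\theta$: from $\operatorname{rank}E_{\theta+1}=\theta=\operatorname{rank}E_\theta$ the last column of $E_{\theta+1}$ is a $K$--combination $\psi^\theta(e_{k_j})=\sum_{k=0}^{\theta-1}\lambda_k\psi^k(e_{k_j})$ of the preceding ones, whence $\sum_j w_j\psi^\theta(e_{k_j})=\sum_k\lambda_k\sum_j w_j\psi^k(e_{k_j})=0$ as well. Thus $xf\in\mathfrak{v}^{-1}(V)$. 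I expect this boundary computation---the point where the chosen $\theta$ is exactly large enough---to be the main obstacle.

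Once $\mathfrak{v}^{-1}(V)$ is known to be a left ideal, I finish as follows. Because $R=K[x;\sigma,\delta]$ is left Euclidean, $\mathcal{R}\cong R/R\mu$ is a principal left ideal ring, so $\mathfrak{v}^{-1}(V)=\mathcal{R}\rho'$ for a monic $\rho'$ right-dividing $\mu$. Since $A_T$ has full column rank $v$ (Lemma~\ref{Wronskian}) and $\operatorname{rank}E_\theta=\theta$, the rank of $A_TE_\theta$ is $\theta$, so $\dim_K V=m-\theta$; comparing with $\dim_K R\rho'/R\mu=m-\deg\rho'$ forces $\deg\rho'=\theta$, which proves~\eqref{ro}. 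For~\eqref{rocalculo}, observe that for a polynomial $\rho=\sum_{i=0}^\theta\rho_ix^i$ the vanishing of the coordinates of $\mathfrak{v}(\rho)$ beyond position $\theta$ reduces the membership $\mathfrak{v}(\rho)\in V$ to the single condition $(\rho_0,\dots,\rho_\theta)B=0$. Hence $(\rho_0,\dots,\rho_\theta)$ lies in the left kernel of $B$ precisely when $\rho\in\mathfrak{v}^{-1}(V)$, that is $\rho\in R\rho'$ in $R$. A degree comparison in $R$, using $\deg\rho\le\theta=\deg\rho'$ together with the additivity of degrees, shows $\rho=g\rho'$ with $g\in K$; so any \emph{nonzero} such $\rho$ has $g\in K^\times$ and generates the same ideal, $\mathcal{R}\rho=\mathcal{R}\rho'=\mathfrak{v}^{-1}(V)$, completing~\eqref{rocalculo}.
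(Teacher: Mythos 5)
Your proof is correct, and it follows the paper's overall strategy: identify the entries of $A_TE_\theta$ with the syndromes $S_{i,k}$, prove that $\mathfrak{v}^{-1}(V)$ is a left ideal of $\mathcal{R}$ by checking stability under left multiplication by $x$, then conclude via principality of left ideals in $\mathcal{R}\cong R/R\mu$ together with the dimension count $\operatorname{rank}(A_TE_\theta)=\theta$. It also uses the same two key ingredients for the stability step, namely the identity \eqref{psiphi} (equivalently the recursion \eqref{Sik+1}) and the maximality of $\theta$, which forces the last column of $E_{\theta+1}$ to be a combination of the preceding ones. Where you genuinely diverge is in how multiplication by $x$ is carried out. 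The paper computes with canonical representatives of degree $<m$, so it needs the explicit reduction formula \eqref{fea}, whose correction term $-\sigma(a_{m-1})\mu_i$ must then be absorbed by a separate computation showing $S_{m,k}=-\sum_{i=0}^{m-1}\mu_iS_{i,k}$ via the minimal equation \eqref{mucero}. Your observation that the membership criterion $\sum_{j=1}^v f(\beta_j)\psi^k(e_{k_j})=0$ depends only on $f$ as an additive endomorphism of $K$, and not on its polynomial representative, renders all of that unnecessary: left multiplication by $x$ becomes composition with $\varphi_u$, and the coefficients $\mu_i$ never enter. This is a genuine simplification of the paper's most technical display; what the paper's coordinate-level recursion buys in exchange is that essentially the same formula reappears (without the $\mu$-term) as the rule \eqref{l1} defining the locator matrix $L$ used in Theorem \ref{locatordiv}. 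Finally, your treatment of part (2) --- reducing membership of degree-$\leq\theta$ elements to the left kernel of $B$, then using degree additivity in $R$ to pin down $\rho=g\rho'$ with $g\in K^{\times}$ --- is a correct and more detailed justification of what the paper dispatches with ``the statement should be clear.''
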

\begin{proof}
\eqref{ro}  
We will prove that the $K$--vector subspace $I = \mathfrak{v}^{-1}(V)$ of $\mathcal{R}$ is a left ideal. To do this, we need just to check that $xI \subseteq I$.  
Given $\sum_{i=0}^{m-1}a_ix^i \in \mathcal{R}$ we get from \eqref{commutacion}, since $\mu = 0$ in $\mathcal{R}$,  that 
\begin{equation}\label{fea}
x(\sum_{i= 0}^{m-1}a_ix^i) = \sum_{i=0}^{m-1}(\sigma(a_{i-1}) + \delta(a_i) - \sigma(a_{m-1})\mu_i)x^i,
\end{equation}
 where we set $a_{-1} = 0$.

Suppose that $(a_0,\dots,a_{m-2},a_{m-1})A_T E_{\theta}=0$.  The maximality of $\theta$ ensures that the last column of $E_{\theta+1}$ is a right linear combination of the former $\theta$ columns. Hence, $$(a_0,\dots,a_{m-2},a_{m-1})A_T E_{\theta+1}=0.$$

Observe that \[ 
A_T E_{\theta +1 }=  
\left( \begin{array}{ccccc}
S_{0,0} & S_{0,1} && \cdots & S_{0,\theta} \\
S_{1,0} & S_{1,1} && \cdots & S_{1,\theta}  \\
\vdots &\vdots && \ddots& \vdots\\
S_{m-1,0} & S_{m-1,1} && \cdots &  S_{m-1,\theta}
\end{array} \right).
 \]
Therefore, 
\begin{equation} \label{sumas}
\sum_{i=0}^{m-1}a_iS_{i,k}=0, \qquad \text{ for all } 0 \leq k \leq \theta.
\end{equation}

For $0\leq k \leq \theta-1$ we have

\renewcommand{\arraystretch}{1.5}
$$\begin{array}{rcll}
\sum_{i=0}^{m-1}(\sigma(a_{i-1})+\delta(a_i))S_{i,k} &\stackrel{\eqref{torcida}}{=} &\sum_{i=0}^{m-1}\{\sigma(a_{i-1})S_{i,k}+\delta(a_iS_{i,k})-\sigma(a_{i})\delta(S_{i,k})\} \\&\stackrel{\eqref{sumas}}{=} & \sum_{i=0}^{m-1}\sigma(a_{i-1})S_{i,k}- \sum_{i=0}^{m-1}\sigma(a_{i})\delta(S_{i,k})& \\
&\stackrel{\eqref{Sik+1}}{=} & \sum_{i=0}^{m-1}\sigma(a_{i-1})S_{i,k}\\&&- \sum_{i=0}^{m-1}\sigma(a_i)[\sigma(S_{i,k+1})+S_{i+1,k}] \\
&= & \sum_{i=0}^{m-1}\sigma(a_{i-1})S_{i,k}- \sigma(\sum_{i=0}^{m-1}a_iS_{i,k+1})
\\&&-\sum_{i=0}^{m-1}\sigma(a_{i})S_{i+1,k} \\
&\stackrel{\eqref{sumas}}{=}& \sum_{i=0}^{m-1}\sigma(a_{i-1})S_{i,k}-\sum_{i=0}^{m-1}\sigma(a_{i})S_{i+1,k} \\
&= & -\sigma(a_{m-1})S_{m,k}.  \\
\end{array}$$

Since, by \eqref{mucero}, $\varphi_u^m+\sum_{i=0}^{m-1}\mu_i\varphi_u^i=0$, we get 
\[
\begin{array}{rcl}S_{m,k}&=&\sum_{j=1}^{v}\varphi_u^{m+k_j}(\alpha)\psi^k(e_{k_j})\\&=&\sum_{j=1}^{v}[- \sum_{i=0}^{m-1}\mu_i\varphi_u^{k_j+i}(\alpha)]\psi^k(e_{k_j})\\&=&- \sum_{i=0}^{m-1}\mu_i\sum_{j=1}^{v}\varphi_u^{k_j+i}(\alpha)\psi^k(e_{k_j})\\&=&-\sum_{i=0}^{m-1}\mu_iS_{i,k}.
\end{array}
\] 
Then $\sum_{i=0}^{m-1}(\sigma(a_{i-1})+\delta(a_i))S_{i,k}=\sum_{i=0}^{m-1}\sigma(a_{m-1})\mu_iS_{i,k}$ and, therefore,
\[
(b_0,b_1, \dots, b_{m-1})A_T E_{\theta}=0,
\]
where $b_i = \sigma(a_{i-1}) + \delta(a_i) - \sigma(a_{m-1})$ for $i=0, \dots, m-1$.  

We thus deduce from \eqref{fea} that $x (\sum_{i=0}^{m-1}a_ix^i) \in I$ whenever $\sum_{i=0}^{m-1}a_ix^i \in I$. 
Hence, $I$ is a left ideal of $\mathcal{R}$ and $I = \mathcal{R}\rho$ for some nonzero polynomial $\rho$. As for its degree concerns, we have
\[
\deg \rho = \dim_K \frac{\mathcal{R}}{\mathcal{R}\rho} = \dim_K \frac{K^m}{V} = \theta,
\]
since $A_TE_\theta$ is full rank.

\eqref{rocalculo} We know from \eqref{ro} that, if $\rho = \rho_0 + \cdots + \rho_{\theta}x^\theta$, then the vector $(\rho_0,\dots, \rho_{\theta}, 0, \dots, 0) \in K^m$ belongs to the left kernel of $A_TE_{\theta}$. Now, the statement should be clear. 
\end{proof}

Next, we will state the result that will allow the location of the error positions. We need to construct a matrix from the polynomial $\rho$ given in Proposition \ref{Prop 3}.

For $j = 0, \dots, m-1$  and $i = 0, \dots m-\theta-1$, set 
\[
l_{0,j} = \begin{cases}
\rho_j & \text{if } j= 0, \dots, \theta \\
0 & \text{if } j = \theta + 1, \dots, m-1 \end{cases},  \qquad l_{i,-1} = 0.
 \]
We may then construct a matrix 
\begin{equation}\label{ELE}
L = \left( \begin{array}{ccccc}
l_{0,0} & l_{0,1} && \cdots & l_{0,m- 1} \\
l_{1,0} & l_{1,1} && \cdots & l_{1,m - 1}  \\
\vdots &\vdots && \ddots& \vdots\\
l_{m-\theta-1,0} & l_{m-\theta-1,1} && \cdots &  l_{m-\theta -1,m - 1}
\end{array} \right)
\end{equation}
by defining its entries recursively as
\[
l_{i+1,j} = \sigma(l_{i,j-1}) + \delta(l_{i,j}).
\]

For $i= 0, \dots, m-1$, let $\epsilon_i$ denote the vector of $K^m$ whose $i$--th component is equal to $1$, and every other component is $0$. By $Row(LA)$ we denote the row space of the matrix $LA$. 

\begin{theorem}\label{locatordiv}
 If $T = \{ k_1, \dots, k_v \}$ is the set of error positions, then  
\[
T = \{ k \in \{0, \dots, m-1\} : \epsilon_k \notin Row(LA) \}.
\]
\end{theorem}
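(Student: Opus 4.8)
The plan is to read off the matrices $L$ and $LA$ through the ring isomorphism $\mathcal{R} \cong R/R\mu$ and the coordinate map $\mathfrak{v}$, so that the statement collapses to an elementary remark about the left kernel of $E_\theta$.

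First I would identify the rows of $L$. The recursion \eqref{l1} defining its entries is exactly the rule \eqref{commutacion} for left multiplication by $x$ in $R = K[x;\sigma,\delta]$: if a polynomial has coordinate vector $(l_{i,0}, \dots, l_{i,m-1})$, then $x$ times it has coordinate vector $(l_{i+1,0}, \dots, l_{i+1,m-1})$ computed by \eqref{l1}. Since the $0$-th row of $L$ is $\mathfrak{v}(\rho)$ and $\deg \rho = \theta$, an induction gives that the $i$-th row of $L$ equals $\mathfrak{v}(x^i\rho)$ for $i = 0, \dots, m-\theta-1$. Because $\deg(x^i\rho) = \theta + i \leq m-1$ throughout this range, no reduction modulo $\mu$ ever intervenes, so these coordinate vectors really are produced by \eqref{l1}. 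As $\rho, x\rho, \dots, x^{m-\theta-1}\rho$ have distinct degrees $\theta, \theta+1, \dots, m-1$, they form a left $K$-basis of the $(m-\theta)$-dimensional left ideal $\mathcal{R}\rho$. Hence $Row(L) = \mathfrak{v}(\mathcal{R}\rho) = V$, the left kernel of $A_T E_\theta$ supplied by Proposition \ref{Prop 3}.

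Next I would transport this description through $A$. Writing $P_T$ for the $m\times v$ matrix whose $j$-th column is $\epsilon_{k_j}^{t}$, one has $A_T = A P_T$ by the definition of $A_T$. Since $A$ is invertible, the substitution $z = wA$ gives $w A_T E_\theta = z A^{-1} A P_T E_\theta = z P_T E_\theta$, so $w \in V$ if and only if $z P_T E_\theta = 0$, whence
\[
Row(LA) = V A = \{\, z \in K^m : z P_T E_\theta = 0 \,\}.
\]
The matrix $P_T E_\theta$ is just $E_\theta$ sitting in the rows indexed by $k_1, \dots, k_v$ and zero in all other rows, so $z P_T E_\theta = 0$ says precisely that the subvector of $z$ at positions $k_1, \dots, k_v$ lies in the left kernel of $E_\theta$. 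Applying this to the standard vectors closes the argument: if $k \notin T$, all components of $\epsilon_k$ at $k_1, \dots, k_v$ vanish, so $\epsilon_k \in Row(LA)$; if $k = k_{j_0} \in T$, then, the $k_j$ being distinct, those components form the $j_0$-th standard vector of $K^v$, whose product with $E_\theta$ is the $j_0$-th row $(e_{k_{j_0}}, \psi(e_{k_{j_0}}), \dots, \psi^{\theta-1}(e_{k_{j_0}}))$, which is nonzero since its first entry is $e_{k_{j_0}} \neq 0$ and $\theta \geq 1$ (as $E_1 \neq 0$). Thus $\epsilon_k \notin Row(LA)$ exactly when $k \in T$, giving the asserted description of $T$.

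The main obstacle is the first step: recognizing the defining recursion of $L$ as left multiplication by $x$ in the skew polynomial ring, and checking that the degree bound $\theta + i \leq m-1$ keeps every computation below the reduction threshold of $\mu$, so that the rows of $L$ genuinely generate $\mathcal{R}\rho$ and $Row(L) = V$. Once this identification is in place, the remainder is the routine linear algebra built on $A_T = A P_T$ and the block structure of $P_T E_\theta$.
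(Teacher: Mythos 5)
Your proof is correct and takes essentially the same route as the paper's: you identify the rows of $L$ with $\mathfrak{v}(\rho), \mathfrak{v}(x\rho), \dots, \mathfrak{v}(x^{m-\theta-1}\rho)$ so that $Row(L) = V$ (the left kernel of $A_TE_\theta$ from Proposition \ref{Prop 3}), transport through the invertible matrix $A$ via $A_T = A I_T$ to get $Row(LA) = \ker(\cdot\, I_TE_\theta)$, and finish by noting that $\epsilon_k I_T E_\theta$ is either zero or a nonzero row of $E_\theta$ according to whether $k \notin T$ or $k \in T$. Your explicit degree-bound check (that $\deg(x^i\rho) = \theta + i \leq m-1$, so no reduction modulo $\mu$ occurs and the recursion \eqref{l1} really computes multiplication by $x$ in $\mathcal{R}$) simply fills in the ``straightforward computation'' that the paper leaves to the reader.
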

\begin{proof}
Let  us first prove that the rows of $L$ form a $K$--basis of $V = \ker (\cdot A_TE_{\theta})$, the left kernel of $A_T E_{\theta}$. According to Proposition \ref{Prop 3}, $\mathfrak{v}(\mathcal{R}\rho) = V$, where $\rho =\sum_{i=0}^{\theta}\rho_i x^{i}$.  

Observe that $\rho, x\rho, \dots, x^{m-1-\theta}\rho$ have different degrees $\theta, \dots, m-1$, so they are $K$--linearly independent in $\mathcal{R}$. Since the dimension of $\mathcal{R}\rho$ is $m-\rho$, we get that they form a basis and, hence, the rows of 
 \[
M_\rho = \left( \begin{array}{c} 
\mathfrak{v}(\rho) \\
\mathfrak{v}(x\rho) \\
\vdots \\
\mathfrak{v}(x^{m - 1 - \theta}\rho)
\end{array} \right) 
\] 
give a basis of $\mathfrak{v}(\mathcal{R}\rho)$. Note that the first row of $M_\rho$ is $\mathfrak{v}(\rho)$. Indeed,  a straightforward computation based on \eqref{torcida} lead to admit that the $j$-th row of $L$ is, precisely, $\mathfrak{v}(x^j\rho)$, for $j=0, \dots, m-1-\theta$. Thus, $L = M_\rho$. 

Let $I$ be denote the identity matrix of size $m \times m$, and denote by $I_T$ the submatrix of $I$ formed by the columns at positions $k_1, \dots, k_v$.  Note that $A_T = A I_T$. This implies that $Row (LA) = \ker (\cdot I_T E_\theta)$. Indeed, we have proved that $Row (L) = \ker (\cdot A_T E_{\theta})$, so that
\[
x \in Row (LA) \Leftrightarrow xA^{-1} \in Row (L) \Leftrightarrow xA^{-1} \in \ker (\cdot A_T E_{\theta}) \Leftrightarrow x \in \ker (\cdot I_T E_\theta).
\] 

Let $i \in \{0,\dots, m-1\}$. If $i \in T$, then $\epsilon_i I_T E_{\theta}$ is the $i$-th row of $E_\theta$, while if $i \notin T$, then $\epsilon_i I_T E_{\theta} = 0$. Since every row of $E_{\theta}$ is non zero, we get that $\epsilon_i \in Row(LA)$ if and only if $i \notin T$. 
\end{proof}

In our decoding algorithm, we need to compute  $\theta$ from the received word $y$. To this end, set
\[
\tau = \lfloor \frac{d-1}{2} \rfloor,
\]
the integer part of $(d-1)/2$.

\begin{lemma}\label{lematheta}
For every $r \geq 1$, define the matrix
\[
S_{r} = \left( \begin{array}{ccccc}
S_{0,0} & S_{0,1} && \cdots & S_{0,r - 1} \\
S_{1,0} & S_{1,1} && \cdots & S_{1,r - 1}  \\
\vdots &\vdots && \ddots& \vdots\\
S_{\tau,0} & S_{\tau,1} && \cdots &  S_{\tau,r - 1}
\end{array} \right).
\]
If $v \leq \tau$, then 
$
\theta =   \max \{ r :   rank \, S_r = r \} . 
$
\end{lemma}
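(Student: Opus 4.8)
The plan is to recognize the matrix $S_r$ as a product of two matrices whose ranks are already under control, and then to invoke the fact that multiplication on the left by a full-column-rank matrix preserves rank. First I would record the factorization. Write $A_T^{(\tau)}$ for the submatrix of $A_T$ formed by its first $\tau+1$ rows, so that $A_T^{(\tau)}$ is a $(\tau+1)\times v$ matrix with entry $\varphi_u^{i+k_j}(\alpha)$ in row $i$ ($0\le i\le \tau$) and column $j$ ($1\le j\le v$). Comparing with the definition \eqref{Sik} of $S_{i,k}$, a direct inspection shows that, for $0\le i\le\tau$ and $0\le k\le r-1$,
\[
\bigl(A_T^{(\tau)}E_r\bigr)_{i,k}=\sum_{j=1}^v\varphi_u^{i+k_j}(\alpha)\,\psi^k(e_{k_j})=S_{i,k},
\]
where $E_r$ is the $v\times r$ matrix introduced in Proposition \ref{Prop 3}. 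In other words $S_r=A_T^{(\tau)}E_r$, a product of a $(\tau+1)\times v$ matrix and a $v\times r$ matrix.

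Next I would argue that $A_T^{(\tau)}$ has full column rank $v$. Because $v\le\tau$, the rows indexed by $0,\dots,v-1$ all occur among the $\tau+1$ rows of $A_T^{(\tau)}$, and the corresponding $v\times v$ top block is precisely $W(\varphi_u^{k_1}(\alpha),\dots,\varphi_u^{k_v}(\alpha))$, which is invertible by Lemma \ref{Wronskian} applied to the subset $\{k_1,\dots,k_v\}\subseteq\{0,\dots,m-1\}$. Hence $\mathrm{rank}\,A_T^{(\tau)}=v$, that is, $A_T^{(\tau)}$ has full column rank.

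Then I would conclude by rank preservation. Over a field, a matrix of full column rank admits a left inverse, so choose $P$ with $PA_T^{(\tau)}=I_v$. From $S_r=A_T^{(\tau)}E_r$ we get $\mathrm{rank}\,S_r\le\mathrm{rank}\,E_r$, while $E_r=PS_r$ gives the reverse inequality; therefore $\mathrm{rank}\,S_r=\mathrm{rank}\,E_r$ for every $r\ge 1$. Consequently the condition $\mathrm{rank}\,S_r=r$ holds exactly when $\mathrm{rank}\,E_r=r$, so the two index sets $\{r:\mathrm{rank}\,S_r=r\}$ and $\{r:\mathrm{rank}\,E_r=r\}$ coincide, and taking maxima yields $\theta=\max\{r:\mathrm{rank}\,S_r=r\}$, which is the assertion.

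I expect no genuine obstacle here beyond unwinding the notation; the content is the factorization $S_r=A_T^{(\tau)}E_r$ together with the invertibility of the Wronskian block. The one place where the hypothesis $v\le\tau$ is essential is the second step: it guarantees that $A_T^{(\tau)}$ has at least $v$ rows so that the invertible $v\times v$ Wronskian block is present, giving full column rank; were $v>\tau$ the matrix $A_T^{(\tau)}$ could drop rank and the rank identity $\mathrm{rank}\,S_r=\mathrm{rank}\,E_r$ would fail. The only care needed is bookkeeping of the index ranges (rows $0,\dots,\tau$, columns $0,\dots,r-1$) so that the shapes in the factorization are stated correctly.
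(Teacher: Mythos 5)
Your proposal is correct and takes essentially the same route as the paper: the paper's proof also factors $S_r = M E_r$, where $M$ is precisely your $A_T^{(\tau)}$ (the first $\tau+1$ rows of $A_T$), uses $v \leq \tau$ together with Lemma \ref{Wronskian} to conclude that $M$ has full column rank $v$, and deduces $\mathrm{rank}\, S_r = \mathrm{rank}\, E_r$ for all $r \geq 1$, hence the stated determination of $\theta$. You simply make explicit two details the paper leaves implicit, namely the identification of the invertible $v \times v$ Wronskian block inside $M$ and the left-inverse argument showing that multiplication by a full-column-rank matrix preserves rank.
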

\begin{proof}
Observe that $S_r = ME_r$, where 
\[
M = \left( \begin{array}{ccccc}
\varphi_u^{k_1}(\alpha) &\varphi_u^{k_2}(\alpha)  && \cdots &\varphi_u^{k_v}(\alpha)  \\
\varphi_u^{k_1+1}(\alpha) & \varphi_u^{k_2+1}(\alpha) && \cdots &  \varphi_u^{k_v+1}(\alpha) \\
\vdots &\vdots && \ddots& \vdots\\
\varphi_u^{k_1+\tau}(\alpha) & \varphi_u^{k_2+\tau}(\alpha) && \cdots &  \varphi_u^{k_v+\tau}(\alpha)
\end{array} \right).
\]
Since  $v \leq \tau$, the rank of $M$ is $v$ due to Lemma \ref{Wronskian}. We thus get that $rk \, S_r = rk \, E_r$ for all $r \geq 1$, which gives the desired determination of $\theta$.
\end{proof}

Next, we derive the proofs of Proposition \ref{kernelrho} and Theorem \ref{Main}.

\medskip

\noindent
\textbf{Proof of Proposition \ref{kernelrho}.}
Lemma \ref{lematheta} gives that, whenever $v \leq \tau$,  $$ \max \{ r : rank \, S_r = r \} = \theta =  \max \{ r : rank \, E_r = r \}.$$ By Lemma \ref{calcular},  the matrix $B$ consists of the first $\theta + 1$ rows of $A_TE_\theta$, as in Proposition \ref{Prop 3}.(\ref{rocalculo}). Now, $B$ has rank $\theta$, so that its left kernel is of dimension $1$ as a $K$--vector space. Proposition \ref{rocalculo} guarantees that $(\rho_0, \rho_1, \dots, \rho_\theta)$ belongs to this kernel and $\rho_\theta \neq 0$. 
\medskip

\noindent
\textbf{Proof of Theorem \ref{Main}.}
By Lemma \ref{lematheta}, since we assume that $v \leq \tau$, we get that $$ \max \{ r : rank \, S_r = r \} = \theta =  \max \{ r : rank \, E_r = r \}.$$ Thus, Theorem \ref{locatordiv} is of application to obtain the first part of Theorem \ref{Main}. The second statement is given by Proposition \ref{errorsol}. 

\medskip

We finally state and prove the main result in this paper.

\begin{theorem}
Assume $K$ to be a commutative field and $v \leq \tau = \lfloor (d-1)/ 2\rfloor$. Then Algorithm \ref{algoritmo} correctly computes the error vector. 
\end{theorem}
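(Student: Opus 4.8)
The plan is to verify that each of the seven steps of Algorithm \ref{algoritmo} reproduces exactly the quantities governed by the preceding results, so that the theorem reduces to chaining those results together rather than proving anything substantially new. First I would dispose of the degenerate case: if the received word is itself a codeword then $e = 0$, all the syndromes $S_{i,0}$ in \eqref{sindromes0} vanish (as $C_{(\varphi_u,\alpha,d)}$ is the left kernel of $H$), and Step~1 returns $e = 0$ correctly; conversely a nonzero syndrome forces $e \neq 0$, which is the standing hypothesis under which all the remaining results were established. From here on I assume $e \neq 0$ and $v \leq \tau$.

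The crux of the assembly is the identification of the \emph{computed} syndromes with the \emph{abstract} ones. The algorithm computes $S_{i,0}$ from the received word by \eqref{sindromes0} and propagates via the recursion \eqref{sindromes1}; Lemma \ref{calcular}, through \eqref{S1} and \eqref{Sik+1}, shows that these coincide with the quantities $S_{i,k}$ defined intrinsically from the error values in \eqref{Sik}. I would emphasize here that the recursion only ever produces $S_{i,k}$ within the admissible range: the matrix $S$ requires $i \leq \tau$ and $k \leq \tau-1$, so the largest index sum occurring is $2\tau - 1 \leq d-2$, which is precisely the bound under which Lemma \ref{calcular} certifies computability from $y$. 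This is the step I expect to be the main obstacle to phrase cleanly, since it is where the hypothesis $v \leq \tau$, the admissible index range, and the structure of the recursion must interlock.

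Next I would justify Step~2. By monotonicity of rank, together with the fact that the first $r$ columns of $E_\theta$ are independent for every $r \leq \theta$ while the $(\theta+1)$-st is dependent on them, the number $\max\{r : \mathrm{rank}\, S_r = r\}$ is detected exactly at the first rank drop; Lemma \ref{lematheta} identifies this with the intrinsic $\theta$, using $v \leq \tau$ to guarantee that the Wronskian factor $M$ has full column rank $v$ so that $\mathrm{rank}\, S_r = \mathrm{rank}\, E_r$ for all $r$. Since $\theta \leq v \leq \tau$ and the matrix $S$ already carries $\tau$ columns, all the information needed to read off $\theta$ is available without computing any syndrome outside the admissible range.

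Finally I would invoke the localization and evaluation results essentially verbatim. Step~3 produces a nonzero $\rho$ in the left kernel of $B$, which Proposition \ref{kernelrho} certifies to be one-dimensional with $\rho_\theta \neq 0$ and to agree with the polynomial of Proposition \ref{Prop 3}. Step~4 builds $L$ from $\rho$ by \eqref{l0} and \eqref{l1}, and Theorem \ref{Main} (equivalently Theorem \ref{locatordiv}) then yields that Step~5 recovers precisely the error positions $T = \{k_1,\dots,k_v\}$ as those $k$ with $\epsilon_k \notin Row(LA)$. With the positions in hand, Step~6 solves the linear system \eqref{errorsol2}, whose coefficient matrix is an invertible transpose Wronskian, so by Proposition \ref{errorsol} (applicable since $v \leq \tau \leq d-1$) the true error values are its unique solution. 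Zero-filling the remaining coordinates in Step~7 then delivers the genuine error vector $e$, whence $y - e = c \in C_{(\varphi_u,\alpha,d)}$, completing the verification.
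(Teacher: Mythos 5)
Your proposal follows the same route as the paper's own proof: correctness is obtained by chaining Lemma \ref{calcular}, Lemma \ref{lematheta}, Proposition \ref{kernelrho}, Theorem \ref{Main} and Proposition \ref{errorsol}, one per step of Algorithm \ref{algoritmo}. Your treatment of Steps 2--7 is sound, and in two places more careful than the paper's: the check that every syndrome touched by the recursion has index sum $i+k \leq 2\tau - 1 \leq d-2$, and the rank-monotonicity argument (first $r$ columns of $E_\theta$ independent for $r\leq\theta$, the $(\theta+1)$-st dependent) showing that the first rank drop of the matrices $S_r$ detects $\theta$.

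The one genuine gap is in Step 1. The correctness of that step is exactly the implication ``all syndromes $S_{i,0}$ vanish $\Rightarrow e = 0$'', equivalently ``$y \in C_{(\varphi_u,\alpha,d)} \Rightarrow e = 0$'', and you assert this (``if the received word is itself a codeword then $e=0$'') without proof. It is not a tautology: a priori $y$ could be a codeword different from the transmitted $c$, in which case the algorithm would halt at Step 1 with the wrong output. The paper closes this point via Proposition \ref{errorsol}: if $e \neq 0$ with $v \leq \tau$ errors and all syndromes were zero, then $(e_{k_1},\dots,e_{k_v})$ would be the unique solution of the homogeneous system \eqref{errorsol2}, whose coefficient matrix is an invertible Wronskian, forcing $e_{k_j}=0$ for every $j$, a contradiction; hence some $S_{i,0} \neq 0$ and the algorithm correctly proceeds past Step 1. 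Alternatively, your assertion follows from Theorem \ref{MDS}: $e = y - c$ would be a codeword of Hamming weight $v \leq \tau < d$, hence zero. Either one-line argument fills the hole; as written, the case ``$e \neq 0$ but the algorithm outputs $0$ in Step 1'' is not excluded.
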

\begin{proof}
The output of Line 1 is $e = 0$ since $C_{(\varphi_u,\alpha,d)}$ is the kernel of the matrix $H$ in Definition \ref{Codefined}. Line 2 runs whenever $e \neq 0$. In such a case, since we are assuming the the number of errors is $v \leq \tau$, it follows from Proposition \ref{errorsol} that $S_{i,0}\neq 0$ for at least one $0 \leq i \leq \tau$, as the linear system \eqref{errorsol2} has a unique solution. This is to mean that $S_1 \neq 0$ and, henceforth, always under the condition $v \leq \tau$, the number $\theta$ computed in Line $2$ equals $\max \{ r : rank \, S_r = r \}$. Proposition \ref{kernelrho} guarantees the existence of a nonzero vector $\rho$ to be computed in Line 3,  which serves as the initial datum to the calculation of the matrix $L$ in Line 4. Finally, Theorem \ref{Main} assures that the error positions and values computed in Lines 5 and 6 lead to a correct output in Line 7. 
\end{proof}

\section{Skew-differential codes as $(\sigma,\delta)$--codes.}\label{SDmodule}
In Section \ref{fundamentos},  the ring $\mathcal{R}$ was proved (Proposition \ref{isopoly}) to be isomorphic to a factor ring of the skew-polynomial ring $R = K[x;\sigma,\delta]$. Indeed, as we will see later, the codes $C_{(\varphi_u, \alpha, d)}$ are left ideals of $\mathcal{R}$ and, henceforth, they constitute a class of  $(\sigma,\delta)$--codes in the sense of \cite{Boucher/Ulmer:2014}, which enjoys an efficient algebraic decoding algorithm (see Algorithm \ref{algoritmo}). In this section, our aim is to describe precisely how the codes $C_{(\varphi_u,\alpha,d)}$ look like from the perspective of 
the ring $R$, although this view, we think,  is less practical, for our purposes, than our choice in the previous sections, which are independent from the forthcoming material.

Given $\raiz \in K$ we may consider the principal left ideal $R(x-\raiz)$ of $R$ generated by $x-\raiz \in R$. Since $K$ is a subring of $R$ in the obvious way, the factor left $R$--module $R/R(x-\raiz)$ is a left $K$--vector space of dimension $1$. An explicit isomorphism
\begin{equation}\label{simpleK}
R/R(x-\raiz) \cong K
\end{equation}
sends the equivalence class of  $g(x) \in R$ onto its \emph{right evaluation} $g[\raiz] \in K$, defined as the remainder of the left Euclidean division
\begin{equation}\label{root1}
g(x) = q(x)(x-\raiz) + g[\raiz], 
\end{equation}
where $q(x) \in R$ is a suitable polynomial. 

The left $R$--module structure of $R/R(x-\raiz)$ is transferred to $K$ via the isomorphism \eqref{simpleK}, and it leads to a ring homomorphism 
\begin{equation}\label{lambda}
\lambda : R \longrightarrow \lend{}{K}, 
\end{equation}
where $\lend{}{K}$ is still denote the ring of all additive endomorphisms of $K$. Recall that $\lambda$ sends $f \in R$ onto the map defined by left multiplication by $f$ according to the left $R$--module structure of $K$. A straightforward computation shows that $\lambda$ sends $f = \sum_i f_ix^i$ onto $\sum_i f_i\varphi_u^i$, so that it acts as the map $\pi$ from Proposition \ref{isopoly}. Henceforth, the kernel of $\lambda$ equals $R\mu = \mu R$, and $\lambda$ induces the isomorphism $R/R\mu \cong \mathcal{R}$ from Proposition \ref{isopoly}. We are assuming, as in Section \ref{fundamentos}, that $\mu \neq 0$ and its degree is $m$. Recall that a $K$--basis of $\mathcal{R}$ is
$\{1, x, \dots, x^{m-1} \}$ where we are identifying each element of $\mathcal{R}$ with its unique representative in $R$ of degree smaller than $m$. This natural basis of $\mathcal{R}$ leads to the corresponding coordinate isomorphism
\[
\mathfrak{v} : \mathcal{R} \to K^m.
\]

\begin{definition}\label{sigmadeltau}
A $K$--linear code $C \subseteq K^m$ is said to be a \emph{$(\sigma, \delta, \raiz)$--code}  if $\mathfrak{v}^{-1}(C)$ is a left ideal of $\mathcal{R}$. These codes will be referred to as \emph{skew-differential codes}.
\end{definition}

\begin{remark}\label{modulecodes}
In \cite{Boucher/Ulmer:2014}, a \emph{module $(\sigma, \delta)$--code} is defined as submodule of a left module of the form $R/Rf$, for some nonzero skew polynomial $f \in R$. Indeed, their definition is given for $K$ a finite field but, obviously, it makes sense for a general field. From this perspective, the $(\sigma,\delta,\raiz)$--codes  are instances of module $(\sigma,\delta)$--codes, when one sets $f=\mu$, the minimal polynomial of $\varphi_\raiz$ over $K^{\varphi_\raiz}$ (and, hence, $R/Rf = \mathcal{R}$). 
\end{remark}

Every $(\sigma,\delta,u)$--code admits a nice presentation in terms of linear polynomials. Recall that, for any $c \in K^*$, we have the conjugate ${}^c\raiz = \sigma(c)\raiz c^{-1} + \delta(c) c^{-1}.$

\begin{proposition}\label{SDcodesversusmodule}
Every $(\sigma,\delta,u)$--code is of the form $$C = \mathfrak{v}(\mathcal{R}g),$$ where 
\begin{equation}\label{lclm}
g = [x-{}^{c_1}\raiz, \dots, x-{}^{c_k}\raiz]_\ell,
\end{equation}
the least common left multiple in $R$ of $x-{}^{c_1}\raiz, \dots, x-{}^{c_k}\raiz$, for some $c_1, \dots, c_k \in K^*$. 
\end{proposition}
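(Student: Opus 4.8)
The plan is to prove the equivalent ring-theoretic statement that every left ideal $I$ of $\mathcal{R}$ has the form $I = \mathcal{R}g$ with $g = [x-{}^{c_1}\raiz, \dots, x-{}^{c_k}\raiz]_\ell$; since a $(\sigma,\delta,\raiz)$--code is by Definition \ref{sigmadeltau} exactly $C = \mathfrak{v}(I)$ for such an $I$, this yields $C = \mathfrak{v}(\mathcal{R}g)$. The whole argument consists in translating the generation of left ideals into linear algebra over $F := K^{\varphi_\raiz}$. First I would record that $\mathcal{R} = \mathrm{End}_F(K)$: each generator of $\mathcal{R}$ (multiplication by an element of $K$, and $\varphi_\raiz$ itself) is $F$--linear, so $\mathcal{R} \subseteq \mathrm{End}_F(K)$, and since $\dim_K\mathcal{R} = m$ by Proposition \ref{JB} while $\dim_F K = m$, both sides have $F$--dimension $m^2$, forcing equality. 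Thus $\mathcal{R}$ is simple Artinian, and its left ideals are classified by a ``common kernel'': for an $F$--subspace $W \subseteq K$ set $L_W = \{f \in \mathcal{R} : f|_W = 0\}$; then $W \mapsto L_W$ is a bijection between $F$--subspaces of $K$ and left ideals of $\mathcal{R}$, with $L_{W_1} \cap L_{W_2} = L_{W_1+W_2}$. Moreover, for a single operator $h$ one has $\mathcal{R}h = L_{\ker h}$, because any $g$ with $\ker g \supseteq \ker h$ factors as $g = fh$ (define $f$ on $\mathrm{im}\,h$ by $f(h(v)) = g(v)$ and extend $F$--linearly).

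Next I would identify the kernel attached to a linear polynomial. Under the homomorphism $\pi$ of Proposition \ref{isopoly}, the polynomial $x-c$ maps to the operator $\varphi_\raiz - c \in \mathcal{R}$, and for $a \neq 0$ one has $\varphi_\raiz(a) = ca$ precisely when $c = {}^{a}\raiz$. I claim that for $c = {}^{a}\raiz$ the eigenspace $\ker(\varphi_\raiz - c)$ is the line $Fa$. Writing a second solution as $b = a\lambda$ and expanding $\varphi_\raiz(a\lambda) = c\,a\lambda$ by means of \eqref{torcida} and the commutativity of $K$, one cancels $\sigma(a)$ and the term $\delta(a)\lambda$ and is left with $\sigma(\lambda)\raiz + \delta(\lambda) = \raiz\lambda$, which by Lemma \ref{invconj} says exactly $\lambda \in K^{\varphi_\raiz} = F$. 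Hence $\mathcal{R}(x-c) = L_{Fa}$ whenever $c = {}^{a}\raiz$.

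It then remains to assemble the pieces. Since the lclm corresponds to the intersection of left ideals in the left Euclidean domain $R$, and all ideals in play contain $R\mu$ (each $x-{}^{a_i}\raiz$ right--divides $\mu$, as its associated eigenspace is nonzero), passing to $\mathcal{R} \cong R/R\mu$ gives, for any $a_1,\dots,a_k \in K^*$,
\[
\mathcal{R}\,[x-{}^{a_1}\raiz, \dots, x-{}^{a_k}\raiz]_\ell \;=\; \bigcap_{i=1}^k \mathcal{R}(x-{}^{a_i}\raiz) \;=\; \bigcap_{i=1}^k L_{Fa_i} \;=\; L_{Fa_1 + \cdots + Fa_k}.
\]
Now given an arbitrary $(\sigma,\delta,\raiz)$--code $C$, the ideal $I = \mathfrak{v}^{-1}(C)$ equals $L_W$ for a unique $F$--subspace $W \subseteq K$. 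Choosing an $F$--basis $a_1,\dots,a_k$ of $W$ and setting $c_i = a_i$ gives $W = Fa_1 + \cdots + Fa_k$, whence $I = \mathcal{R}g$ with $g = [x-{}^{c_1}\raiz,\dots,x-{}^{c_k}\raiz]_\ell$ and $C = \mathfrak{v}(\mathcal{R}g)$, as required (the degenerate cases $W = 0$ and $W = K$ recovering $g = 1$ and $g = \mu$).

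I expect the main obstacle to lie in the middle step: correctly matching the ring-theoretic generator $\mathcal{R}(x-c)$ with the linear-algebraic object $L_{\ker(\varphi_\raiz - c)}$ and pinning down, via Lemma \ref{invconj}, that this kernel is a single $F$--line $Fa$ with $c = {}^{a}\raiz$. Once this dictionary between linear right--divisors of $\mu$ and lines of $K$ is in place, the decisive observation is simply that \emph{every} $F$--subspace of $K$ is a sum of lines, which closes the argument at once. An alternative, more computational route would invoke the Wronskian criterion of \cite{Delenclos/Leroy:2007} directly to show that the $x-{}^{a_i}\raiz$ are $P$--independent and that their lclm has degree $\dim_F W$, but the endomorphism-ring viewpoint makes the structure transparent and avoids explicit degree counting.
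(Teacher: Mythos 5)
Your proof is correct, and it shares its structural pillar with the paper's proof --- the Jacobson--Bourbaki identification of $\mathcal{R}$ with the full endomorphism ring $\rend{K^{\varphi_u}}{K}$ --- but it executes the classification of left ideals along a dual, more self-contained route. The paper argues module-theoretically: $R/R(x-\raiz)$ is the unique simple left $\mathcal{R}$--module, every maximal left ideal of $\mathcal{R}$ has the form $\mathcal{R}(x-{}^{c}\raiz)$ (a step the paper does not prove but cites from \cite[pp.~40--41]{Gomez:2014}), and every left ideal of the semisimple ring $\mathcal{R}$ is a finite intersection of maximal left ideals, which yields the lclm form. You instead classify left ideals by the subspaces they annihilate, and you prove the key dictionary $\mathcal{R}(x-{}^{c}\raiz)=L_{K^{\varphi_u}c}$ directly, by the eigenspace computation with \eqref{torcida} and Lemma \ref{invconj}; this replaces the external citation. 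Your closing step --- every $K^{\varphi_u}$--subspace of $K$ is a sum of lines --- is precisely the dual, under the correspondence $W\mapsto L_W$ (which turns sums into intersections and lines into maximal left ideals), of the paper's ``every left ideal is a finite intersection of maximal left ideals.'' What your route buys: it is self-contained; it exhibits the $c_i$ in \eqref{lclm} as a $K^{\varphi_u}$--basis of the annihilated subspace, so the representation is automatically irredundant and the degree count $\deg g=k$ falls out, anticipating Proposition \ref{DCCdim}; and it handles the degenerate ideals $0$ and $\mathcal{R}$ explicitly. What the paper's route buys is brevity, by leaning on established structure theory. The one ingredient you assert without full proof is that \emph{every} left ideal of $\rend{K^{\varphi_u}}{K}$ equals some $L_W$; this is standard (every left ideal of a semisimple ring is generated by an idempotent $e$, and your own factorization argument gives $\mathcal{R}e=L_{\ker e}$), and it is no heavier than the unproved semisimplicity fact the paper itself invokes, so I do not regard it as a gap.
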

\begin{proof}
Observe that the left $\mathcal{R}$--module $R/R(x - \raiz)$ is simple because it is of dimension $1$ as a left $K$--vector space. By Jacoboson-Bourbaki's Theorem (see \cite[Theorem 4.1]{Sweedler:1975}), $\lambda$ gives a ring isomorphism
$
\mathcal{R} \cong \rend{K^{\varphi_u}}{K},
$
so $\mathcal{R}$ is isomorphic to a full matrix ring with coefficients in $K^{\varphi_u}$.  We know thus that every simple left $\mathcal{R}$--module is  isomorphic to $R/R(x-\raiz)$. This entails (see, e.g., \cite[pp. 40-41]{Gomez:2014}) that every maximal left ideal of $\mathcal{R}$ is of the form $\mathcal{R}(x- {}^c\raiz)$ for a suitable non-zero $c \in K$. Since every left ideal of $\mathcal{R}$ is the intersection of finitely many maximal left ideals, we get the description \eqref{lclm}.  
\end{proof}

Our next aim is to discuss when the representation \eqref{lclm} is irredundant, which will also lead to the computation of a parity-check matrix of the code $C$. 
We will use that  the non-commutative evaluation defined in \eqref{root1} obeys some rules, which are to be recalled. 

Following \cite{Lam/Leroy:1988} define by recursion,  for $a \in K$:

$$N_0(a)=1,$$
$$N_{n+1}(a)=\sigma(N_n(a))a+\delta(N_n(a)).$$

If $g(x)=\sum_i g_ix^i\in K[x;\sigma, \delta]$ then,   by \cite[Lemma 2.4]{Lam/Leroy:1988},  

\begin{equation}\label{root2}
g[a] =\sum_i g_iN_i(a).
\end{equation}

Following \cite{Delenclos/Leroy:2007}, define the \emph{Vandermonde matrix}  
\[
V_{n}(c_1,\dots,c_k)=\left( \begin{array}{ccccc}
1 & 1&& \cdots &1 \\
c_1& c_2 && \cdots &  c_k \\
N_{2}(c_1)& N_{2}(c_2) && \cdots &  N_{2}(c_k)  \\
\vdots &\vdots && \ddots& \vdots\\
N_{n-1}(c_1)& N_{n-1}(c_2) && \cdots &  N_{n-1}(c_k)
\end{array} \right) 
\]

and the \emph{Wronskian matrix}  
\[
W_{n}^{\raiz}(c_1,\dots,c_k)=\left( \begin{array}{ccccc}
c_1 &c_2 && \cdots &c_k \\
\varphi_u(c_1)& \varphi_u(c_2) && \cdots &  \varphi_u(c_k)  \\
\vdots &\vdots && \ddots& \vdots\\
\varphi_u^{n-1}(c_1)& \varphi_u^{n-1}(c_2) && \cdots &  \varphi_u^{n-1}(c_k)
\end{array} \right) 
\]
for each $n \geq 1$.  Then, by \cite[Proposition 4.4]{Delenclos/Leroy:2007},
\begin{equation}  \label{Vd=W}
V_{n}({}^{c_1}\raiz,\dots,{}^{c_k}\raiz)diag(c_1,..,c_k)=W_{n}^{\raiz}(c_1,\dots,c_k),
\end{equation}
where $diag(c_1, \dots, c_k)$ denotes the diagonal matrix built from the list $c_1, \dots, c_k$.

\begin{proposition}\label{DCCdim}
Let $\{ c_1, \dots, c_k \} \subseteq K^*$ be a  linearly independent set over $K^{\varphi_u}$, with $k \leq m-1$, and set  
\[
g = [x-{}^{c_1}\raiz, \dots, x-{}^{c_k}\raiz]_\ell.
\]
Then $deg(g) = k$, $g$ is a right divisor of $\mu$,  and $\mathfrak{v}(\mathcal{R}g)$ is the left kernel of the Wronskian matrix 
\[
W_{m}^{\raiz}(c_1,\dots,c_k).
\]
In other words, $C =  \mathfrak{v}(\mathcal{R}g)$ is a $K$--linear skew-differential code of dimension $m-k$ with parity-check matrix $W_{m}^{\raiz}(c_1,\dots,c_k)$. 
\end{proposition}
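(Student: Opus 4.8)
The plan is to reduce the three assertions to the right-evaluation factor theorem for skew polynomials, fed by the Vandermonde--Wronskian identity \eqref{Vd=W}. The single fact that drives everything is obtained by comparing the two sides of \eqref{Vd=W} entry by entry, namely
\[
N_i({}^{c}\raiz)\,c = \varphi_u^i(c) \qquad (i\geq 0,\ c\in K^*).
\]
This dictionary converts any condition phrased through right evaluation at a conjugate ${}^{c_j}\raiz$ into one phrased through the Wronskian columns $\bigl(\varphi_u^i(c_j)\bigr)_i$, which is exactly the bridge between the polynomial side and the matrix side.

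First I would establish $\deg g = k$. The least common left multiple of $k$ monic linear polynomials has degree at most $k$, with equality exactly when the $k\times k$ Vandermonde $V_k({}^{c_1}\raiz,\dots,{}^{c_k}\raiz)$ is invertible; by \eqref{Vd=W} this is equivalent to invertibility of the $k\times k$ Wronskian $W_k^{\raiz}(c_1,\dots,c_k)$, which holds here because $c_1,\dots,c_k$ are $K^{\varphi_u}$--linearly independent (the Wronskian criterion of \cite{Delenclos/Leroy:2007} already used in Lemma \ref{Wronskian}). Next, to show $g$ right-divides $\mu$, I would verify $\mu[{}^{c_j}\raiz]=0$ for each $j$. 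Writing $\mu=x^m+\sum_{i=0}^{m-1}\mu_ix^i$ and applying \eqref{root2} together with the displayed identity,
\[
\mu[{}^{c_j}\raiz]\,c_j = \sum_{i=0}^{m}\mu_iN_i({}^{c_j}\raiz)\,c_j = \sum_{i=0}^{m}\mu_i\varphi_u^i(c_j) = 0,
\]
the last equality being the minimal equation \eqref{mucero} evaluated at $c_j$ (here $\mu_i\in K^{\varphi_u}$ and $\mu_m=1$). As $c_j\neq 0$, we get $\mu[{}^{c_j}\raiz]=0$, so each $x-{}^{c_j}\raiz$ right-divides $\mu$ by the factor theorem, and hence so does their least common left multiple $g$. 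In particular $R\mu\subseteq Rg$.

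It then remains to identify the parity-check matrix. Since $R\mu\subseteq Rg$, the left ideal $\mathcal{R}g$ of $\mathcal{R}=R/R\mu$ equals the image $Rg/R\mu$, so an element of $\mathcal{R}$ with representative $f=\sum_{i=0}^{m-1}f_ix^i$ of degree $<m$ lies in $\mathcal{R}g$ if and only if $f\in Rg$. Using $Rg=\bigcap_{j=1}^{k}R(x-{}^{c_j}\raiz)$ and the factor theorem, this is equivalent to $f[{}^{c_j}\raiz]=0$ for all $j$; multiplying by the nonzero $c_j$ and invoking the displayed identity once more, the conditions become
\[
\sum_{i=0}^{m-1}f_i\,\varphi_u^i(c_j)=0 \qquad (j=1,\dots,k),
\]
that is, $\mathfrak{v}(f)\,W_m^{\raiz}(c_1,\dots,c_k)=0$. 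Hence $C=\mathfrak{v}(\mathcal{R}g)$ is precisely the left kernel of $W_m^{\raiz}(c_1,\dots,c_k)$. For the dimension I would argue $\dim_K\mathcal{R}g = m-\dim_K\mathcal{R}/\mathcal{R}g = m-\dim_K R/Rg = m-\deg g = m-k$, consistent with the fact that the top $k\times k$ block of $W_m^{\raiz}(c_1,\dots,c_k)$ is the invertible Wronskian $W_k^{\raiz}$, forcing rank $k$ and a left kernel of dimension $m-k$.

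I expect the main obstacle to be purely organizational rather than computational: one must be careful that $R\mu\subseteq Rg$ (so that $\mathcal{R}g$ is honestly $Rg/R\mu$ and passing to degree-$<m$ representatives is harmless) and that the lclm description $Rg=\bigcap_j R(x-{}^{c_j}\raiz)$ is used correctly. Once the identity $N_i({}^{c}\raiz)c=\varphi_u^i(c)$ and the factor theorem are in place, the three claims drop out together; the only non-formal ingredient is the Wronskian nonsingularity criterion of \cite{Delenclos/Leroy:2007}, which simultaneously controls the degree count and the rank of the parity-check matrix.
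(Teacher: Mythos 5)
Your proof is correct and takes essentially the same route as the paper's: the degree count comes from the Wronskian/Vandermonde nonsingularity criterion of Delenclos--Leroy, and $\mathfrak{v}(\mathcal{R}g)$ is identified with the left kernel of $W_m^{u}(c_1,\dots,c_k)$ by translating right evaluation at the conjugates ${}^{c_j}u$ through the identity \eqref{Vd=W} (your entrywise form $N_i({}^{c}u)\,c=\varphi_u^i(c)$ is exactly that identity read entry by entry). If anything, you are more complete than the paper, whose proof never explicitly verifies that $g$ right-divides $\mu$ (equivalently $R\mu\subseteq Rg$) --- a fact needed so that elements of $\mathcal{R}g$ can be tested by their degree-$<m$ representatives lying in $Rg$ --- whereas you establish it cleanly from the minimal equation \eqref{mucero} and the factor theorem.
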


\begin{proof} 
By  \cite[Theorem 5.3]{Delenclos/Leroy:2007}, $deg(g) = k$.
On the other hand,   $f=\sum_{k=0}^{m-1}f_kx^k\in \mathcal{R}g$ if and only if $x-{}^{c_j}\raiz$ right divides $f$ for all $j=1,\dots,m$. This is equivalent, by \eqref{root1} and \eqref{root2}, to the condition \[
(f_0,\dots, f_{m-1})V_m({}^{c_1}\raiz,...,{}^{c_k}\raiz)=0.
\]

By \eqref{Vd=W}, this is equivalent to the condition \[(f_0,\dots, f_{m-1})W_m^{\raiz}(c_1,\dots,c_k)=0\]
as required.
\end{proof}

We are now ready to locate our RS skew-differential codes within the class of all $(\sigma,\delta,u)$--codes. 

\begin{corollary}\label{ubicacion}
The code $C_{(\varphi_u, \alpha, d)}$ is a $(\sigma, \delta, u)$--code given by $C_{(\varphi_u, \alpha, d)} = \mathfrak{v}(\mathcal{R}g)$, where
\[
g = [x-{}^\alpha\raiz,x-{}^{\varphi_u(\alpha)}\raiz, \dots, x-{}^{\varphi_u^{d-2}(\alpha)}\raiz]. 
\]
\end{corollary}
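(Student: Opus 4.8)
The plan is to recognize the parity-check matrix $H$ defining $C_{(\varphi_u,\alpha,d)}$ as one of the Wronskian matrices featured in Proposition \ref{DCCdim}, and then to invoke that proposition directly. First I would set $c_j = \varphi_u^{j-1}(\alpha)$ for $j = 1, \dots, d-1$, so that the number of generators is $k = d-1$. Comparing entries is then purely a matter of bookkeeping: the $(i,j)$-entry of $W_m^{\raiz}(\alpha, \varphi_u(\alpha), \dots, \varphi_u^{d-2}(\alpha))$ equals $\varphi_u^i(c_j) = \varphi_u^i(\varphi_u^{j-1}(\alpha)) = \varphi_u^{i+j-1}(\alpha)$, which, after the index shift $j \mapsto j-1$, is exactly the corresponding entry of the matrix $H$ in Definition \ref{Codefined}. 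Hence $H = W_m^{\raiz}(\alpha, \varphi_u(\alpha), \dots, \varphi_u^{d-2}(\alpha))$, and so $C_{(\varphi_u,\alpha,d)}$, being by definition the left kernel of $H$, is precisely the left kernel of this Wronskian matrix.

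Next I would verify the two hypotheses of Proposition \ref{DCCdim} for the list $c_1, \dots, c_{d-1}$. Since $2 \leq d \leq m$, we immediately get $k = d-1 \leq m-1$. For the required $K^{\varphi_u}$-linear independence, I would use that $\alpha$ is a cyclic vector for $\varphi_u$: by the very meaning of cyclicity (as recorded in the proof of Lemma \ref{Wronskian}, together with Proposition \ref{phifinite}), the full family $\{\alpha, \varphi_u(\alpha), \dots, \varphi_u^{m-1}(\alpha)\}$ is a $K^{\varphi_u}$-basis of $K$. In particular its initial segment $\{\alpha, \varphi_u(\alpha), \dots, \varphi_u^{d-2}(\alpha)\}$ is linearly independent over $K^{\varphi_u}$, and, being part of a basis, its members are nonzero, i.e. lie in $K^*$.

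With both hypotheses checked, Proposition \ref{DCCdim} applies to the list $c_1, \dots, c_{d-1}$ and gives $C_{(\varphi_u,\alpha,d)} = \mathfrak{v}(\mathcal{R}g)$ with $g = [x - {}^{c_1}\raiz, \dots, x - {}^{c_{d-1}}\raiz]_\ell = [x - {}^{\alpha}\raiz, x - {}^{\varphi_u(\alpha)}\raiz, \dots, x - {}^{\varphi_u^{d-2}(\alpha)}\raiz]_\ell$, which is exactly the asserted generator. Since $\mathfrak{v}^{-1}(C_{(\varphi_u,\alpha,d)}) = \mathcal{R}g$ is then a left ideal of $\mathcal{R}$, the code is a $(\sigma,\delta,\raiz)$-code in the sense of Definition \ref{sigmadeltau}. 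I do not expect a genuine obstacle here: the entire content lies in the two elementary observations above — the notational identification $H = W_m^{\raiz}(\alpha,\dots,\varphi_u^{d-2}(\alpha))$ and the fact that cyclicity of $\alpha$ supplies the independence needed to feed Proposition \ref{DCCdim} — after which the conclusion is immediate. The only point demanding a moment's care is confirming the index alignment between the two matrices, so that the exponents of $\varphi_u$ match column by column.
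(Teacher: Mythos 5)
Your proposal is correct and is precisely the argument the paper intends: the corollary is stated without a separate proof exactly because it is this immediate specialization of Proposition \ref{DCCdim} to $c_j = \varphi_u^{j-1}(\alpha)$, using that cyclicity of $\alpha$ makes $\{\alpha, \varphi_u(\alpha), \dots, \varphi_u^{d-2}(\alpha)\}$ a $K^{\varphi_u}$-independent subset of $K^*$ and that $H = W_m^{\raiz}(\alpha, \varphi_u(\alpha), \dots, \varphi_u^{d-2}(\alpha))$. Your index bookkeeping and the verification of both hypotheses ($k = d-1 \leq m-1$ and independence) are exactly what is needed.
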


\begin{remark}\label{ncgenpol}
The code $C_{(\varphi_u,\alpha,d)}$ was defined  by means of its parity-check matrix $H$ (see Definition \ref{Codefined}). Of course, in order to specify the encoding of messages, one may use the standard method for linear codes of constructing a generator matrix from $H$, as done in Subsections \ref{block} and \ref{convolutional}.

An alternative is to use the arithmetic of the ring $\mathcal{R}$. Indeed, one may compute the skew polynomial $g$ from Corollary \ref{ubicacion} and use it as an encoder similarly to the commutative cyclic case.  As for the computation of $g$ concerns, one may use the non-commutative extended Euclidean algorithm (see, e.g.  \cite[Ch. I, Theorem 4.33]{Bueso/alt:2003}). For instance, in the example described in Subsection \ref{block}, the set of conjugates becomes
$$\{^{\alpha}u , ^{\varphi_u(\alpha)}u , ^{\varphi^2_u(\alpha)}u, ^{\varphi^3_u(\alpha)}u\}=\{a^{137}, a^{212}, a^{141}, a^{225}
\},$$
so that a generator polynomial of this code is
\[
g =  \left [ x-a^{137}, x-a^{212}, x-a^{141}, x-a^{225} \right ]_\ell 
=  x^4+a^{187}x^3+a^{99}x^2+a^{98}x+ a^{218}.
\] 
\end{remark}

\begin{remark}
Module $(\sigma,\delta)$--codes over a finite field $\mathbb{F}$ generated be a polynomial of the form $[x-\alpha_1, \dots, x-\alpha_n]_\ell \in \mathbb{F}[x;\sigma,\delta]$ have been proved to be MDS in \cite[Theorem 5]{Boucher/Ulmer:2014} whenever $\alpha_1, \dots, \alpha_n \in \mathbb{F}$ are suitable powers of an element in the algebraic closure of $\mathbb{F}$ subject to additional conditions called ``Hamming 1'' and ``Hamming 2''. These codes are different, in the case $K = \mathbb{F}$, from that of Corollary \ref{ubicacion}, which are known to be MDS by Theorem \ref{MDS}.  Also, a decoding algorithm, different from Algorithm \ref{algoritmo} was designed in \cite{Boucher/Ulmer:2014}, under the condition ``Hamming 1'' which, in particular, requires $\delta = 0$. 
\end{remark}

Finally, we analyze how skew RS codes from \cite{Gomez/alt:2018a} and  RS differential convolutional codes \cite{Gomez/alt:2019b} are particular examples of RS skew-differential  codes. In this way, Algorithm \ref{algoritmo} both extends to a considerable broader class of codes and, also, simplifies the decoding algorithms designed in \cite{Gomez/alt:2018a} and \cite{Gomez/alt:2019b}. 

\begin{example}\label{RSauto}
Let  $\sigma$ be an automorphism of $K$ of finite order $m$, and choose a cyclic vector $\alpha$ of $\sigma$ as a vector space over $K^{\sigma}$. Set $\beta = \alpha^{-1}\sigma(\alpha)$ and $$g = [x-\beta, x-\sigma(\beta), \dots, x - \sigma^{d-2}(\beta)]_\ell,$$ the left least common multiple being computed in $K[x;\sigma]$.  Skew RS codes from \cite{Gomez/alt:2018a}  are defined as $\mathfrak{v}(\mathcal{R}g$), where $$\mathcal{R} = K[x;\sigma]/\langle x^m - 1 \rangle.$$ Since $1^\alpha = \beta$, and $\varphi_1 = \sigma$ we see, after Corollary \ref{ubicacion}, that $$\mathfrak{v}(\mathcal{R}g) = C_{(\sigma,\alpha,d)}.$$ That is, for $u = 1$ and $\delta = 0$, we obtain the skew RS codes from \cite{Gomez/alt:2018a}. Thus, we may apply the decoding algorithm presented here, which is simpler than that of \cite{Gomez/alt:2018a}. 
\end{example}

\begin{example}\label{RSder}
Let $\delta$ be any $\mathbb{F}$--linear derivation of the the field $\mathbb{F}(t)$ of rational functions in the variable $t$ with coefficients in a finite field $\mathbb{F}$.  If $p$ is the characteristic of $\mathbb{F}$, then  the degree of $\mathbb{F}(t)$ over the field of constants $\mathbb{F}(t)^\delta$ is $p$, and the minimal polynomial of $\delta$ becomes $\mu = x^p - \gamma x$, where $\gamma = \delta^p(t)/\delta(t)$ (see \cite{Gomez/alt:2019b} for details). For any $c \in \mathbb{F}(t)$, the logarithmic derivative is defined as $L(c) = c^{-1}\delta(c)$. Choose a cyclic vector $\alpha$ for the $\mathbb{F}(t)^\delta$--linear map $\delta$ and set $$g = [x-L(\alpha),x- L(\delta(\alpha)), \dots, x- L(\delta^{d-2}(\alpha))]_\ell \in \mathbb{F}(t)[x;\delta].$$ In \cite{Gomez/alt:2019b}, the differential convolutional RS codes are defined as $\mathfrak{v}(\mathcal{R}g)$, where, this time, $$\mathcal{R} = \mathbb{F}(t)[x;\delta]/\langle x^p - \gamma x \rangle. $$ Since $0^\alpha = L(\alpha)$ and $\varphi_0 = \delta$, we deduce from Corollary \ref{ubicacion} that $$\mathfrak{v}(\mathcal{R}g) = C_{(\delta,\alpha,d)}.$$ In other words, we obtain the differential convolutional RS codes from \cite{Gomez/alt:2019b} by setting $\sigma = id_{\mathbb{F}(t)}$ and $u = 0$, to which we also may apply the decoding algorithm presented in this paper. Again, it results simpler than that from \cite{Gomez/alt:2019b}.
\end{example}

\bibliographystyle{amsplain}

\begin{thebibliography}{3}


\bibitem{Boucher/Geiselmann/Ulmer:2007}
D. Boucher, W. Geiselmann, and F. Ulmer. \emph{Skew-cyclic codes}. Appl. Algebr.
  Eng. Comm \textbf{18} (2007), 379--389.

\bibitem{Boucher/Ulmer:2009}
D. Boucher, F. Ulmer. \emph{Coding with skew polynomial rings.} J. Symb.
  Comp. \textbf{44} (2009), 1644--1656.

\bibitem{Boucher/Ulmer:2014} D. Boucher, F. Ulmer. \emph{Linear codes using skew polynomials with automorphisms and derivations.} Des. Codes Cryptogr. \textbf{70} (2014) 405--431.

\bibitem{Bueso/alt:2003}
J. L. Bueso, J. G\'omez-Torrecillas, and A. Verschoren.
\newblock \emph{Algorithmic methods in Non-Commutative Algebras. Applications to Quantum Groups}. 
\newblock Springer, Dordretch, 2003.

\bibitem{Delenclos/Leroy:2007} J. Delenclos and A. Leroy.   \emph{Noncommutative symmetric functions and W-polynomials.} Journal of Algebra and Its Applications, \textbf{6} (2007), 815--837.

\bibitem{Forney:1970} G. D. Forney. 
\newblock \emph{Convolutional codes I: Algebraic structure.} \newblock {IEEE}
Trans. Inform. Theory, \textbf{16} (1970) 720--738.

\bibitem{Gluesing/Schmale:2004}
\newblock H. Gluesing-Luerssen and W. Schmale. \emph{On cyclic convolutional codes.}
\newblock  Acta Appl. Math. \textbf{82} (2004), 183--237.

\bibitem{Gomez:2014}
J. G\'omez-Torrecillas. 
\newblock \emph{Basic Module Theory over Non-commutative Rings with Computational Aspects of Operator Algebras. }
\newblock In: Algebraic and Algorithmic Aspects of Differential and Integral Operators, M. Barkatou, T. Cluzeau, G. Regensburger, M. Rosenkranz, eds. LNCS 8372, pages 23-82, Springer, 2014.

\bibitem{Gomez/alt:2016}
J. G\'omez-Torrecillas, F. J. Lobillo and G. Navarro. 
\newblock \emph{A new perspective of cyclicity in convolutional codes.}
\newblock  {IEEE} Trans. Inform. Theory  \textbf{62} (2016), 2702--2706.

\bibitem{Gomez/alt:2017a}
J.~G{\'{o}}mez{-}Torrecillas, F.~J. Lobillo, and G.~Navarro. 
\newblock  \emph{Ideal codes over separable ring extensions.}
\newblock  {IEEE} Trans. Inform. Theory, \textbf{63} (2017), 2796--2813.
  
\bibitem{Gomez/alt:2017b}
J.~G{\'o}mez-Torrecillas, F.~J. Lobillo, and G.~Navarro.
\newblock \emph{A {Sugiyama}-like decoding algorithm for convolutional codes.}
\newblock {IEEE} Trans. Inform. Theory, \textbf{63} (2017) 6216--6226. 
  
\bibitem{Gomez/alt:2018a}
J.~G\'{o}mez-Torrecillas, F.~J. Lobillo, and G.~Navarro.
\newblock \emph{{Peterson-Gorenstein-Zierler} algorithm for skew {RS} codes.}
\newblock Linear and Multilinear Algebra,  \textbf{66} (2018) 469--487.



\bibitem{Gomez/alt:2019b} 
J. G\'omez-Torrecillas, F. J. Lobillo,  G. Navarro. and P. S\'anchez-Hern\'andez. \emph{Peterson-Gorenstein-Zierler Algorithm for differential convolutional codes.} Preprint arXiv:1910.11574

\bibitem{Gorenstein/Zierler:1961}
D. Gorenstein, and N. Zierler. \emph{A class
  of error-correcting codes in \(p^m\) symbols.} J. Soc. 
  Ind. Appl. Math. \textbf{9} (1961), 207--214.


\bibitem{Jacobson:1996} N. Jacobson.  \textit{Finite-dimensional division algebras over fields.} Springer-Verlag, 1996. Corrected 2nd print 2010.
  

\bibitem{Lam/Leroy:1988} T. Y. Lam, and A. Leroy.   \newblock \emph{Vandermonde and wronskian matrices over
division rings.} 
\newblock  J. Algebra, \textbf{119} (1988),  308--336.


\bibitem{Liu/etal:2015}
S.~Liu, F.~Manganiello, and F.~R. Kschischang.
\newblock \emph{Construction and decoding of generalized skew-evaluation codes.}
\newblock In: 2015 IEEE 14th CanadianWorkshop on Information Theory (CWIT),
  St. John's, NL, pages 9--13, 2015.
  
\bibitem{Lopez/Szabo:2013}
 S. R. L\'opez-Permouth, and S. Szabo. 
 \newblock \emph{Convolutional codes with additional algebraic structure.}
\newblock  J. Pure Appl. Algebra \textbf{217} (2013), 958--972.

\bibitem{Martinez:2018} U. Mart\'inez-Pe\~nas.   \emph{Skew and linearized Reed-Solomon codes and maximum sum rank distance codes over any division ring.} J. Algebra \textbf{504} (2018) 587-612.

 \bibitem{Peterson:1960}
W. W. Peterson. \emph{Encoding and error-correction procedures for the Bose-Chaudhuri
  codes. {IRE}.} Trans. Inform. Theory \textbf{6} (1960), 459--470.%
\bibitem{Piret:1976} 
 P.~Piret. 
 \newblock \emph{Structure and constructions of cyclic convolutional codes.} 
 \newblock  {IEEE} Trans. Inform. Theory, \textbf{22} (1976), 147--155.
 
\bibitem{Roos:1979}
 C. Roos. 
  \newblock \emph{On the structure of convolutional and cyclic convolutional codes.}
 \newblock {IEEE} Trans. Inform. Theory, \textbf{25} (1979), 673--686.

 
\bibitem{sage} 
SageMath, the Sage Mathematics Software System (Version 8.2),
   The Sage Developers, 2019, http://www.sagemath.org.

  
  \bibitem{Sugiyama:1975}
Y. Sugiyama, M. Kasahara, S. Hirasawa, and T. A. Namekawa. \emph{Method for solving key
  equation for decoding {G}oppa codes.} Inform. Control, \textbf{27} (1975), 87--99.

  
\bibitem{Sweedler:1975} M. Sweedler. \emph{The predual theorem to the Jacobson-Bourbaki theorem.} Transactions of the American Mathematical Society,  \textbf{213} (1975) 391--406. 

\bibitem{Winter:1974} D. J. Winter. \textit{The structure of fields.} Springer-Verlag New York, 1974. 

\end{thebibliography}

\end{document}